\documentclass[11pt,a4paper]{article}
\usepackage[T1]{fontenc}
\usepackage{mathptmx}
\usepackage[utf8]{inputenc}
\usepackage[margin=2.5cm]{geometry}
\usepackage{amsmath,amssymb,amsthm,mathtools,bm}
\usepackage{graphicx}
\usepackage{booktabs}
\usepackage{authblk}
\usepackage[expansion=false]{microtype}
\usepackage{xcolor}
\usepackage[colorlinks=true,linkcolor=blue!50!black,citecolor=blue!50!black,urlcolor=blue!50!black]{hyperref}
\usepackage{cleveref}
\usepackage[english]{babel}

\theoremstyle{plain}
\newtheorem{proposition}{Proposition}
\newtheorem{lemma}[proposition]{Lemma}
\theoremstyle{remark}
\newtheorem{remark}{Remark}

\newcommand{\dd}{\mathrm{d}}
\newcommand{\rhoc}{\rho_{c}}
\newcommand{\astar}{a_{\ast}}
\newcommand{\Ric}{\mathcal{R}}
\newcommand{\Mzero}{\mathcal{M}_{0}}
\newcommand{\Xmax}{\Xi}

\title{Bouncing Dust Collapse and Black-to-White Hole Transition on the Brane}
\author[1]{Rikpratik Sengupta\thanks{rikpratik.sengupta@gmail.com}}
\author[2]{Chiranjeeb Singha\thanks{chiranjeeb.singha@iucaa.in}}
\affil[1]{Department of Physics, Indian Institute of Technology Kanpur, Kalyanpur 208016, India}
\affil[2]{Inter-University Centre for Astronomy and Astrophysics, Post Bag 4, Pune 411007, India}
\date{}

\begin{document}
\maketitle

\begin{abstract}
We study the spherical collapse of a homogeneous dust cloud on a three-brane in the Shtanov--Sahni braneworld, in which the extra dimension is timelike. In the high-energy regime the brane Friedmann equation acquires a negative quadratic term in the matter density, controlled by a critical density $\rhoc=2\lambda$, where $\lambda$ is the magnitude of the negative brane tension. For a spatially closed interior we integrate the dynamics exactly. Every turning point of the scale factor lies above $\astar=\sqrt{3/4\pi\lambda}$, a bounce exists if and only if a simple condition on the comoving density holds, and all curvature invariants stay below bounds fixed by $\lambda$ alone along the entire cycle. Comoving worldlines are geodesically complete. We then match the cloud across a comoving boundary to a spherically symmetric exterior with a trace-free effective source and solve the Israel conditions in closed form. The exterior is a charged Vaidya geometry whose mass function and tidal charge are fixed by the boundary radius. The tidal charge is positive, opposite in sign to the Randall--Sundrum case, and the exterior cannot be static. A trapped region forms on the boundary if and only if the mass exceeds $2/(3\sqrt{\pi\lambda})$, and in that case the bounce still takes place in an untrapped neighbourhood of the boundary. For trapped clouds we construct an exterior over the full contraction, bounce and re-expansion by joining an advanced and a retarded charged Vaidya patch through a static wedge. The matched metric is of class $C^{1,1}$, the effective energy conditions hold, and the cloud re-emerges through a white hole region into a second asymptotic region. In the general relativistic limit $\lambda\to\infty$ the Oppenheimer--Snyder solution is recovered.
\end{abstract}

\medskip
\noindent\textbf{Keywords:} gravitational collapse, braneworld, timelike extra dimension, singularity avoidance, tidal charge, Vaidya spacetime

\section{Introduction}

Gravitational collapse of a sufficiently massive object ends, in general relativity, in a spacetime singularity. The singularity theorems of Penrose and of Hawking and Penrose show that this outcome is generic once a trapped surface has formed and the energy conditions hold \cite{Penrose1965,HawkingPenrose1970}. The first explicit example is the Oppenheimer--Snyder (OS) model \cite{OS1939}, in which a homogeneous pressureless sphere described by a closed Friedmann--Lema\^itre--Robertson--Walker (FLRW) interior is matched to a Schwarzschild exterior. The cloud contracts to zero radius in finite proper time, a horizon forms on the way, and curvature diverges at the end. Inhomogeneous generalisations of the OS model, the Lema\^itre--Tolman--Bondi solutions \cite{Tolman1934,Bondi1947}, show a richer causal structure, including naked singularities for some initial data, but they share the same basic fate of the matter.

Since curvature becomes Planckian near the endpoint, the classical description cannot be trusted there, and a large literature asks whether quantum gravity corrections can replace the singularity by a bounce. Thin-shell analyses of bouncing collapse without commitment to a particular mechanism show that a bounce must occur at or outside the threshold of trapped-region formation if the exterior is Schwarzschild and the junction is minimal \cite{BenAchour2020}. Effective models inspired by loop quantum cosmology produce pulsating compact objects with a Planckian curvature bound \cite{BenAchourUzan2020}. In these examples the repulsive effect that halts the collapse is of quantum origin.

A different route to bounded curvature uses classical higher-dimensional geometry. In the Randall--Sundrum (RS) braneworld \cite{RS1,RS2} our universe is a three-brane in a five-dimensional anti-de Sitter bulk, and the effective four-dimensional equations contain a term quadratic in the stress-energy and a projection of the bulk Weyl tensor \cite{SMS2000,Maartens2004}. Dadhich, Maartens, Papadopoulos and Rezania obtained a static black hole on such a brane whose metric has the Reissner--Nordstr\"om form with a tidal charge instead of an electric one \cite{Dadhich2000}. For collapse, Bruni, Germani and Maartens showed that a homogeneous dust interior cannot be matched to a static exterior \cite{Bruni2001}, and Govender and Dadhich argued that the exterior must then carry radiation of Vaidya type \cite{Govender2002}. Related work studied braneworld stars and collapse with a radiative bulk \cite{Germani2001,Pal2006,Wiseman2002,Deruelle2001,Visser2003,Creek2006}. In all of these RS-type settings the quadratic correction is attractive, so it strengthens rather than halts the collapse.

Shtanov and Sahni proposed a dual construction in which the extra dimension is timelike and the bulk has a positive cosmological constant \cite{ShtanovSahni2003}. The brane then carries negative tension, and the sign of the quadratic correction is reversed. The resulting cosmology has a non-singular bounce for matter satisfying standard energy conditions, and the bounce can recur, which provides cyclic evolution \cite{Kanekar2001,SahniToporensky2012,Sahni2015}. The same framework supports traversable wormholes with finite tidal forces \cite{Sengupta2023wh}. An obvious question is whether the sign reversal also changes the fate of a collapsing cloud. This is the question we address.

We consider the OS configuration on the Shtanov--Sahni (SS) brane: a closed FLRW dust interior glued along a comoving timelike surface to a spherically symmetric exterior. The main results are the following.
\begin{enumerate}
\item The interior dynamics is integrated exactly. At every turning point the scale factor is bounded below by $\astar=\sqrt{3/4\pi\lambda}$, which depends only on the brane tension. A bounce exists if and only if the comoving density exceeds an explicit threshold. For a given cloud the actual bounce radius is determined by its comoving density and generally lies above $\astar$ (\cref{sec:interior}).
\item The Ricci scalar, the squared Ricci tensor and the Kretschmann scalar of the interior are bounded along the whole trajectory by constants proportional to powers of $\lambda$, and comoving worldlines are geodesically complete (\cref{sec:curv}).
\item If the exterior is of Vaidya form with a trace-free effective source, the Israel junction conditions have a unique solution, which we give in closed form. The exterior is a charged Vaidya geometry with a positive tidal charge, and it cannot be static (\cref{sec:exterior}).
\item Trapped surfaces form on the boundary if and only if the cloud mass exceeds $2/(3\sqrt{\pi\lambda})$, independently of the cloud radius. When they do form, the bounce occurs while the boundary is untrapped (\cref{sec:trapping}).
\item For trapped clouds the boundary lies between the apparent horizons during the trapped interval and inside the inner one at the bounce. An advanced exterior for the contraction and a retarded exterior for the expansion can be joined through a static wedge into a single exterior with continuous curvature, bounded curvature scalars and non-negative effective energy flux. Distant observers see a black hole, and the cloud re-expands into a second asymptotic region (\cref{sec:global}).
\end{enumerate}
Section~\ref{sec:num} gives a numerical illustration, and \cref{sec:disc} discusses the results and their scope.

\paragraph{Conventions.} We use signature $(-,+,+,+)$ on the brane, $(-,-,+,+,+)$ in the bulk, and units $G_N=c=1$. Greek indices label brane coordinates. The Ricci scalar is written $\Ric$ to distinguish it from the areal radius $R$.

\section{Effective brane equations}\label{sec:framework}

\subsection{Field equations}

The SS model consists of a four-dimensional brane $\mathcal B$ in a five-dimensional bulk $\mathcal M$ of signature $(-,-,+,+,+)$, with action \cite{ShtanovSahni2003}
\begin{equation}
S=M^{3}\!\int_{\mathcal M}\!\sqrt{-g}\,(\mathcal R_5-2\Lambda_5)\,\dd^{5}x+2M^3\!\int_{\mathcal B}\!\sqrt{-h}\,K\,\dd^{4}x+\int_{\mathcal B}\!\sqrt{-h}\,\bigl(m^{2}\Ric-2\sigma+L_{\rm m}\bigr)\dd^{4}x ,
\label{eq:action}
\end{equation}
where $\Lambda_5>0$ is the bulk cosmological constant, $\sigma<0$ the brane tension, $M$ and $m$ the five- and four-dimensional Planck masses, $K$ the trace of the extrinsic curvature and $L_{\rm m}$ the matter Lagrangian. The unit normal $n^A$ to the brane is timelike, $n^An_A=-1$. We write
\begin{equation}
\lambda\equiv|\sigma|=-\sigma>0 ,\qquad \rhoc\equiv 2\lambda .
\label{eq:lambda}
\end{equation}

Projecting the bulk equations on the brane with the Gauss--Codazzi relations and imposing $\mathbb Z_2$ symmetry yields effective four-dimensional equations \cite{SMS2000,Maartens2004,ShtanovSahni2003,Sahni2005}. We use them in the high-energy regime in which the induced-gravity term $m^2\Ric$ is subdominant and the effective four-dimensional cosmological constant has been tuned to zero. This is the regime in which the bounce of the SS cosmology occurs. In this regime the equations read
\begin{equation}
G_{\mu\nu}=8\pi T_{\mu\nu}-\frac{48\pi}{\lambda}\,\pi_{\mu\nu}-\mathcal E_{\mu\nu},
\label{eq:EFE}
\end{equation}
with
\begin{equation}
\pi_{\mu\nu}=-\tfrac14 T_{\mu\alpha}T^{\alpha}{}_{\nu}+\tfrac1{12}T\,T_{\mu\nu}+\tfrac18 h_{\mu\nu}T_{\alpha\beta}T^{\alpha\beta}-\tfrac1{24}h_{\mu\nu}T^{2}.
\label{eq:pi}
\end{equation}
The tensor $\mathcal E_{\mu\nu}$ is the projection of the bulk Weyl tensor on the brane, with the sign convention chosen so that it enters \eqref{eq:EFE} as shown. It is symmetric and trace-free,
\begin{equation}
\mathcal E_{\mu\nu}=\mathcal E_{\nu\mu},\qquad \mathcal E^{\mu}{}_{\mu}=0 ,
\label{eq:Etraceless}
\end{equation}
and the contracted Bianchi identity together with $D^\mu T_{\mu\nu}=0$ gives
\begin{equation}
D^{\mu}\mathcal E_{\mu\nu}=-\frac{48\pi}{\lambda}D^{\mu}\pi_{\mu\nu}.
\label{eq:Ediv}
\end{equation}
Relative to RS II, the quadratic term enters with the opposite sign because the normal is timelike. For $\lambda\to\infty$ the equations reduce to general relativity with a Weyl term that is no longer constrained by the matter. The factor in front of $\pi_{\mu\nu}$ is fixed by requiring that a homogeneous isotropic dust universe obeys the Friedmann equation \eqref{eq:Friedmann} below with critical density $\rhoc=2\lambda$.

In vacuum $T_{\mu\nu}=0$, hence $\pi_{\mu\nu}=0$, and \eqref{eq:EFE} becomes
\begin{equation}
G_{\mu\nu}=-\mathcal E_{\mu\nu},\qquad \Ric=0 ,\qquad D^\mu\mathcal E_{\mu\nu}=0 .
\label{eq:vacuum}
\end{equation}
The brane vacuum is therefore not the four-dimensional Einstein vacuum: the geometry is sourced by the trace-free, conserved tensor $\mathcal E_{\mu\nu}$, which encodes the non-local bulk influence. The only restrictions that the brane equations place on $\mathcal E_{\mu\nu}$ are \eqref{eq:Etraceless} and \eqref{eq:Ediv}. Whether a given $\mathcal E_{\mu\nu}$ is realised by a regular bulk solution is a separate question that is not addressed in this paper.

The effective equations are used for all densities up to $\rhoc$. The induced-gravity term and higher-order corrections in the extrinsic curvature are neglected, so the results below hold to the extent that these corrections remain small as the density approaches $\rhoc$, which is the regime in which the bounce of the SS cosmology is described \cite{ShtanovSahni2003}. The bulk is taken to be $\mathbb Z_2$ symmetric about the brane.

\subsection{The timelike extra dimension}\label{sec:timelike}

The sign of the quadratic term is the origin of the results below, so we state what it rests on. In RS II the bulk is anti-de Sitter, the extra dimension is spacelike and the tension is positive. The SS model is obtained by changing the signature of the extra dimension and the sign of the bulk cosmological constant. The Gauss equation then carries the sign $n^An_A=-1$ in front of the quadratic extrinsic-curvature terms, and the Israel condition across the brane relates the extrinsic curvature to the matter content with the corresponding sign. The net effect is that the correction $\propto\rho^2$ to the Friedmann equation is negative, as in \eqref{eq:Friedmann}, and that the tension must be negative if the effective gravitational coupling is to be positive \cite{ShtanovSahni2003,Sahni2005}.

A timelike extra dimension is known to lead to tachyonic Kaluza--Klein gravitons in the original SS proposal \cite{ShtanovSahni2003}. Iglesias and Kakushadze have shown that this pathology is not universal: in a five-dimensional Einstein--Hilbert--Gauss--Bonnet setting with a second timelike direction there is a solitonic three-brane that localises gravity completely, and no tachyonic or negative-norm states appear \cite{Iglesias2001}. We take the effective equations \eqref{eq:EFE} as the working description of the brane and do not analyse the Kaluza--Klein spectrum of the specific background used here. This is a limitation of the present work, which we return to in \cref{sec:disc}.

\section{The interior: a closed dust cloud}\label{sec:interior}

\subsection{Dynamics}

The interior metric is the closed FLRW form
\begin{equation}
ds^{2}=-\dd\tau^{2}+a(\tau)^{2}\bigl(\dd\chi^{2}+\sin^{2}\chi\,\dd\Omega^{2}\bigr),
\label{eq:interior}
\end{equation}
filled with dust, $T_{\mu\nu}=\rho\,u_\mu u_\nu$, so that $\dot\rho+3(\dot a/a)\rho=0$ and
\begin{equation}
\rho=\frac{\rho_0}{a^{3}},\qquad \rho_0=\text{const}>0 .
\end{equation}
For dust $\pi_{00}=\rho^2/12$ and $\pi_{ij}=(\rho^2/12)\,h_{ij}$, so, with the Weyl projection taken to vanish inside the cloud, the $00$ component of \eqref{eq:EFE} is
\begin{equation}
\Bigl(\frac{\dot a}{a}\Bigr)^{2}+\frac1{a^{2}}=\frac{8\pi}{3}\,\rho\Bigl(1-\frac{\rho}{\rhoc}\Bigr),\qquad \rhoc=2\lambda ,
\label{eq:Friedmann}
\end{equation}
and the second Friedmann equation follows from \eqref{eq:Friedmann} and the conservation law,
\begin{equation}
\frac{\ddot a}{a}=-\frac{4\pi}{3}\,\rho\Bigl(1-\frac{4\rho}{\rhoc}\Bigr).
\label{eq:Ray}
\end{equation}
The vanishing of the interior Weyl projection removes the dark radiation term $C/a^4$ that would otherwise appear on the right side of \eqref{eq:Friedmann}. A non-zero $C$ changes the first integral \eqref{eq:potential} below, and with it the turning points and the junction solution, so the results apply to $C=0$. The brane equations are therefore equivalent to Einstein's equations with the effective perfect fluid
\begin{equation}
\rho_{\rm eff}=\rho\Bigl(1-\frac{\rho}{\rhoc}\Bigr),\qquad p_{\rm eff}=-\frac{\rho^{2}}{\rhoc}.
\label{eq:effective}
\end{equation}
The effective pressure is negative and grows with density. The effective strong energy condition $\rho_{\rm eff}+3p_{\rm eff}=\rho(1-4\rho/\rhoc)\ge0$ fails for $\rho>\rhoc/4$, and the effective null energy condition $\rho_{\rm eff}+p_{\rm eff}=\rho(1-2\rho/\rhoc)\ge0$ fails for $\rho>\rhoc/2=\lambda$.

Define the constant $\Mzero\equiv 8\pi\rho_0/3$ (a length in our units). Equation \eqref{eq:Friedmann} becomes a one-dimensional problem,
\begin{equation}
\dot a^{2}=g(a)-1,\qquad g(a)\equiv\frac{\Mzero}{a}-\frac{\Mzero\,\rho_0}{\rhoc\,a^{4}},\qquad \ddot a=\tfrac12 g'(a).
\label{eq:potential}
\end{equation}
The last relation follows by differentiation and agrees with \eqref{eq:Ray}. In the limit $\rhoc\to\infty$, $g=\Mzero/a$ and \eqref{eq:potential} is the OS cycloid with maximal radius $\Mzero$ and collapse to $a=0$ in the proper time $\pi\Mzero/2$.

We need three characteristic scales,
\begin{equation}
a_z=\Bigl(\frac{\rho_0}{\rhoc}\Bigr)^{1/3},\qquad a_m=\Bigl(\frac{4\rho_0}{\rhoc}\Bigr)^{1/3},\qquad \Xmax\equiv g(a_m)=2\pi\Bigl(\frac{\rho_0^{2}\lambda}{2}\Bigr)^{1/3}.
\label{eq:scales}
\end{equation}
The function $g$ vanishes at $a_z$, increases up to its maximum $\Xmax$ at $a_m=2^{2/3}a_z$, where $\rho=\rhoc/4$, and decreases to zero as $a\to\infty$.

\subsection{Turning points and the universal bound}

\begin{proposition}[Density bound and universal lower bound]\label{prop:bound}
Let $a(\tau)$ solve \eqref{eq:Friedmann}. Then
\begin{enumerate}
\item[(i)] $0<\rho<\rhoc$ at all times;
\item[(ii)] at any turning point $\dot a=0$ one has $a\ge\astar\equiv\sqrt{3/(4\pi\lambda)}$, with equality if and only if $\rho=\lambda$ at that point;
\item[(iii)] at a turning point $\ddot a>0$ if and only if $\rho>\rhoc/4$.
\end{enumerate}
\end{proposition}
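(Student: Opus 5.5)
The proof works directly from the first integral \eqref{eq:potential} and the relation $\rho=\rho_0/a^3$.

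\emph{Part (i).} Since $\dot a^2\ge 0$, \eqref{eq:potential} gives $g(a)\ge 1>0$ along any solution. Writing $g(a)=\frac{\Mzero}{a}\bigl(1-\rho/\rhoc\bigr)$, positivity of $g$ with $\Mzero>0$ and $a>0$ forces $\rho<\rhoc$. Positivity of $\rho$ is immediate from $\rho_0>0$ and $a>0$. The same argument also shows $a>a_z$ throughout, since $g$ is nonpositive for $a\le a_z$. That in turn keeps $a$ bounded away from zero, so the solution never reaches the singular value $a=0$ and $\rho$ stays finite.

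\emph{Part (ii).} At a turning point $\dot a=0$, equation \eqref{eq:Friedmann} reduces to
\begin{equation*}
\frac{1}{a^2}=\frac{8\pi}{3}\,\rho\Bigl(1-\frac{\rho}{\rhoc}\Bigr).
\end{equation*}
I will bound the right-hand side by maximising the quadratic $f(\rho)=\rho(1-\rho/\rhoc)$ over $\rho\in(0,\rhoc)$. The maximum sits at $\rho=\rhoc/2=\lambda$, where $f=\rhoc/4=\lambda/2$. Hence $1/a^2\le 4\pi\lambda/3$, which is $a\ge\astar$. Because $f$ is strictly concave, the maximum is attained only at $\rho=\lambda$, so equality $a=\astar$ holds if and only if $\rho=\lambda$ at the turning point. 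The main care needed here is to make sure the right-hand side is positive at turning points, but this already follows from (i).

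\emph{Part (iii).} Equation \eqref{eq:Ray} gives $\ddot a=-\tfrac{4\pi}{3}a\rho\bigl(1-4\rho/\rhoc\bigr)$. Since $a\rho>0$, the sign of $\ddot a$ is the sign of $4\rho/\rhoc-1$. So $\ddot a>0$ exactly when $\rho>\rhoc/4$, and this holds at every point of the trajectory, in particular at turning points. Equivalently, $\ddot a=\tfrac12 g'(a)$, and $g'(a)>0$ precisely for $a<a_m$, that is for $\rho>\rhoc/4$, which matches the monotonicity of $g$ noted after \eqref{eq:scales}.

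I do not expect any step to be a real obstacle. The one point to watch is that (iii) relies on \eqref{eq:Ray}, which the paper derives from \eqref{eq:Friedmann} for $\dot a\neq0$. At a turning point I will therefore use the form $\ddot a=\tfrac12 g'(a)$ instead, which is valid there because the solution of the second-order equation is smooth.
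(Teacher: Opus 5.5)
Your proof is correct and follows the paper's argument essentially step for step: positivity of $\rho(1-\rho/\rhoc)$ (which you phrase via $g\ge1$) for (i), maximising the concave quadratic at $\rho=\lambda$ for (ii), and the sign of \eqref{eq:Ray} for (iii). One cosmetic point: $\ddot a=\tfrac12 g'(a)$ is literally the same equation as \eqref{eq:Ray}, so switching to it at turning points changes nothing, and what actually justifies (iii) there is that a solution is required to satisfy the second-order equation (by continuity of $\ddot a$, or directly from the field equations), not merely the first integral.
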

\begin{proof}
(i) The left side of \eqref{eq:Friedmann} is at least $1/a^2>0$, hence $\rho(1-\rho/\rhoc)>0$. (ii) At a turning point $1/a^2=(8\pi/3)\rho(1-\rho/\rhoc)$. The right side is maximal at $\rho=\rhoc/2=\lambda$, where it equals $2\pi\rhoc/3=4\pi\lambda/3$. Hence $1/a^2\le4\pi\lambda/3$. (iii) follows from \eqref{eq:Ray}.
\end{proof}

The bound $\astar$ is universal: it depends on $\lambda$ only. It is a lower bound on the radius at which a turning point can occur, not the bounce radius of a given cloud. The actual radius is a root of $g(a)=1$ and depends on $\rho_0$, as the next proposition shows.

\begin{proposition}[Closed dust cloud]\label{prop:cycle}
Let $\Xmax$ be defined by \eqref{eq:scales}.
\begin{enumerate}
\item[(a)] If $\Xmax<1$ there is no solution of \eqref{eq:Friedmann} with $\dot a^2\ge0$. Equivalently, solutions exist only for $\rho_0\ge\rho_{0,\min}\equiv 1/(2\pi^{3/2}\sqrt\lambda)$.
\item[(b)] If $\Xmax=1$ the only solution is the static one, $a=a_m$, $\rho=\rhoc/4$.
\item[(c)] If $\Xmax>1$ the equation $g(a)=1$ has exactly two roots $a_-<a_+$ with $a_z<a_-<a_m<a_+$. Both are simple, $\ddot a(a_-)>0$ and $\ddot a(a_+)<0$, and the scale factor oscillates periodically between $a_-$ and $a_+$ with period
\begin{equation}
T=2\int_{a_-}^{a_+}\frac{\dd a}{\sqrt{g(a)-1}}<\infty .
\label{eq:period}
\end{equation}
The bounce radius satisfies $a_-\ge\astar$ with equality only for $\rho_0=\lambda\astar^{3}$, and the bounce density $\rho_b=\rho_0/a_-^3$ lies in $(\rhoc/4,\rhoc)$ and equals $\lambda$ only in that case.
\end{enumerate}
\end{proposition}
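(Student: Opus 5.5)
The whole argument rests on the shape of $g$ recorded after \eqref{eq:scales}. Write $g(a)=\Mzero a^{-4}(a^3-a_z^3)$. Then
\[
g'(a)=\Mzero a^{-5}\bigl(4a_z^3-a^3\bigr),
\]
so $g$ is negative on $(0,a_z)$, strictly increasing on $(a_z,a_m)$, strictly decreasing on $(a_m,\infty)$, and tends to $0$ as $a\to\infty$. Its unique maximum is $\Xmax=g(a_m)$. To get the closed form of $\Xmax$, substitute $a_m^3=4\rho_0/\rhoc$ to obtain $\Xmax=\tfrac34\Mzero/a_m$, and then insert $\Mzero=8\pi\rho_0/3$ and $\rhoc=2\lambda$. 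Everything below is read off from the sign of $g-1$, using \eqref{eq:potential}.

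\emph{Parts (a) and (b).} Since $\dot a^2=g(a)-1$ requires $g(a)\ge1$, and $\max g=\Xmax$, no solution exists when $\Xmax<1$. Solving $\Xmax=1$ for $\rho_0$ gives $\rho_0^2\lambda/2=(2\pi)^{-3}$, that is $\rho_0=1/(2\pi^{3/2}\sqrt\lambda)=\rho_{0,\min}$, and $\Xmax$ is increasing in $\rho_0$. When $\Xmax=1$, the set where $g\ge1$ is the single point $a_m$, so $a\equiv a_m$. This is consistent with the equations because $\ddot a=\tfrac12 g'(a_m)=0$, and at this point $\rho=\rhoc/4$.

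\emph{Part (c).} By strict monotonicity on each branch and the intermediate value theorem, there is exactly one root of $g=1$ in $(a_z,a_m)$ and exactly one in $(a_m,\infty)$. On $(0,a_z]$ there is none, since $g\le0$ there. Both roots are simple because $g'\neq0$ away from $a_m$: $g'>0$ at $a_-$ and $g'<0$ at $a_+$. Hence $\ddot a=\tfrac12 g'$ has the stated signs at the two roots.

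For the periodic motion, the potential problem $\dot a^2=g-1$ confines $a$ to $[a_-,a_+]$, where $g-1>0$ in the interior. Near a simple root, $g(a)-1\sim|g'(a_\pm)|\,|a-a_\pm|$, so the integrand in \eqref{eq:period} has only inverse-square-root singularities and $T<\infty$. At each endpoint the motion reverses direction because $\ddot a\neq0$ there. The solution obtained by gluing the half-oscillations is $C^2$, and it is unique by the standard treatment of turning points through the second-order equation $\ddot a=\tfrac12 g'(a)$. I expect this step to be the main technical point. The uniqueness issue at $\dot a=0$ is exactly why I would work with the second-order form, where the right-hand side is smooth on $a>0$.

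\emph{Bounce density and radius.} Since $a_z<a_-<a_m$ and $\rho_0/a^3$ is decreasing in $a$, the bounce density satisfies $\rho_b\in(\rho_0/a_m^3,\rho_0/a_z^3)=(\rhoc/4,\rhoc)$. Proposition~\ref{prop:bound}(ii) gives $a_-\ge\astar$, with equality if and only if $\rho_b=\lambda$. If equality holds, then $\rho_0=\lambda\astar^3$.

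Conversely, suppose $\rho_0=\lambda\astar^3$. One checks directly that $a=\astar$ is a root of $g=1$: there $\rho=\lambda$, and $\tfrac{8\pi}{3}\lambda\cdot\tfrac12=1/\astar^2$. This root lies in $(a_z,a_m)$ because $\lambda\in(\rhoc/4,\rhoc)$, so it equals $a_-$. For any other $\rho_0$, the point $\astar$ is not a root with $\rho=\lambda$, so the inequality $a_-\ge\astar$ is strict and $\rho_b\neq\lambda$.

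Finally, I would verify that $\lambda\astar^3$ exceeds $\rho_{0,\min}$, so that $\Xmax>1$ in this case. Indeed, $\lambda\astar^3=\lambda(3/4\pi\lambda)^{3/2}$, and comparing with $\rho_{0,\min}$ reduces to the numerical inequality $(3/4)^{3/2}>1/2$, which holds since $(3/4)^{3/2}\approx0.65$.
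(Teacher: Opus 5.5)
Your proposal is correct and follows the paper's proof essentially step by step: the shape of $g$ gives (a)--(c), the second-order equation $\ddot a=\tfrac12 g'(a)$ handles the reflection at the simple roots, and \Cref{prop:bound}(ii) together with the direct check $g(\astar)=1$ for $\rho_0=\lambda\astar^3$ settles the equality case. The differences are cosmetic. You get $\rho_b<\rhoc$ from $a_->a_z$ rather than from \Cref{prop:bound}(i), and your closing check $\lambda\astar^3>\rho_{0,\min}$ is redundant, since $g(\astar)=1$ with $\astar\neq a_m$ already forces $\Xmax>1$.
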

\begin{proof}
From the shape of $g$ stated above, $\dot a^2=g-1\ge0$ requires $g(a)\ge1$, which is possible if and only if $\Xmax\ge1$. The condition $\Xmax\ge1$ is equivalent to $\rho_0^2\ge 1/(4\pi^3\lambda)$, which gives (a). For $\Xmax=1$ the set $\{g\ge1\}$ is the single point $a_m$, and there $g'(a_m)=0$, so $\ddot a=0$. For $\Xmax>1$ the monotonicity of $g$ on either side of $a_m$ gives two simple roots, with $a_->a_z$ because $g(a_z)=0<1$. Since $\ddot a=g'/2$, $\ddot a(a_-)>0>\ddot a(a_+)$. The one-dimensional problem $\ddot a=g'(a)/2$ has a smooth global solution confined to $[a_-,a_+]$, which reflects at the simple roots in finite time, so the motion is periodic, and the integral \eqref{eq:period} converges because the roots are simple. By \Cref{prop:bound}(ii), $a_-\ge\astar$. If $\rho_0=\lambda\astar^3$ then $g(\astar)=\astar^{2}(8\pi/3)\lambda(1-\tfrac12)=1$, and $\astar<a_m=2^{1/3}\astar$ identifies $a_-=\astar$. Finally $a_-<a_m$ gives $\rho_b>\rho_0/a_m^3=\rhoc/4$, and \Cref{prop:bound}(i) gives $\rho_b<\rhoc$.
\end{proof}

For large clouds, $\rho_0\gg\rhoc$, expanding $g(a)=1$ near $a_z$ gives
\begin{equation}
a_-=a_z\Bigl[1+\frac{a_z}{3\Mzero}+O\bigl((a_z/\Mzero)^2\bigr)\Bigr],\qquad \rho_b=\rhoc\Bigl[1-\frac{a_z}{\Mzero}+\dots\Bigr].
\label{eq:largecloud}
\end{equation}
The bounce density approaches $\rhoc$ but never reaches it. The fixed value $\rho=\lambda$ obtained for the saturating cloud $\rho_0=\lambda\astar^3$ is a property of that particular configuration. It is not a universal critical density.

Figure~\ref{fig:one} shows $\dot a^2$ as a function of $a$ for $\rho_0=\lambda$ together with the bound $\astar$, and the dependence of $a_-$ and $\rho_b/\rhoc$ on $\rho_0$.

\begin{figure}[t]
\centering
\includegraphics[width=0.96\textwidth]{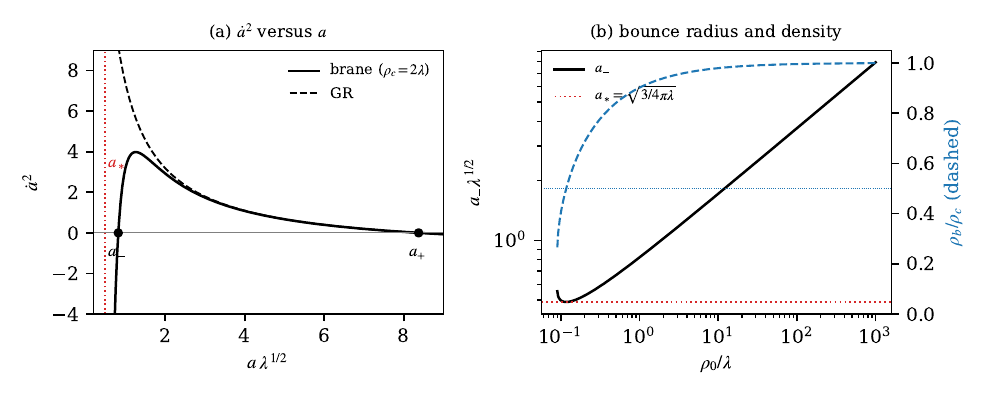}
\caption{(a) The function $\dot a^{2}=g(a)-1$ for $\rho_0=\lambda$ in units $\lambda=1$ (solid) and its general relativistic limit $\Mzero/a-1$ (dashed). The turning points $a_\mp$ and the universal bound $\astar$ (dotted) are marked. (b) Bounce radius $a_-$ (solid, left axis) and bounce density $\rho_b/\rhoc$ (dashed, right axis) as functions of $\rho_0$. The radius never goes below $\astar$ and the density never reaches $\rhoc$.}
\label{fig:one}
\end{figure}

\subsection{Curvature along the whole trajectory}\label{sec:curv}

Write $X\equiv\ddot a/a$ and $Y\equiv\dot a^2/a^2+1/a^2$. For the metric \eqref{eq:interior} the standard kinematic identities are
\begin{equation}
\Ric=6(X+Y),\qquad R_{\mu\nu}R^{\mu\nu}=12\,(X^{2}+XY+Y^{2}),\qquad R_{\alpha\beta\gamma\delta}R^{\alpha\beta\gamma\delta}=12\,(X^{2}+Y^{2}).
\label{eq:invariants}
\end{equation}
Using \eqref{eq:Friedmann} and \eqref{eq:Ray},
\begin{equation}
X=-\frac{4\pi}{3}\rho\Bigl(1-\frac{4\rho}{\rhoc}\Bigr),\qquad Y=\frac{8\pi}{3}\rho\Bigl(1-\frac{\rho}{\rhoc}\Bigr),
\label{eq:XY}
\end{equation}
so that all three invariants are functions of the density alone.

\begin{proposition}[Bounded curvature]\label{prop:curv}
Along any solution, $-\pi\rhoc/12\le X<4\pi\rhoc$ and $0<Y\le 2\pi\rhoc/3$. Consequently
\begin{equation}
|\Ric|<28\pi\rhoc=56\pi\lambda,\qquad R_{\mu\nu}R^{\mu\nu}<\tfrac{688}{3}\pi^2\rhoc^{2},\qquad R_{\alpha\beta\gamma\delta}R^{\alpha\beta\gamma\delta}<\tfrac{592}{3}\pi^{2}\rhoc^{2}.
\end{equation}
At the bounce of a cloud with bounce density $\rho_b$,
\begin{equation}
\Ric_b=8\pi\rho_b\Bigl(1+\frac{2\rho_b}{\rhoc}\Bigr),
\end{equation}
and for the saturating cloud $\Ric_b=16\pi\lambda$, $R_{\mu\nu}R^{\mu\nu}=64\pi^{2}\lambda^{2}$, $R_{\alpha\beta\gamma\delta}R^{\alpha\beta\gamma\delta}=\tfrac{128}{3}\pi^2\lambda^2$.
\end{proposition}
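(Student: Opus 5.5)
The plan is to reduce everything to a single real variable. By \Cref{prop:bound}(i) every solution has $0<\rho<\rhoc$, so I set $x\equiv\rho/\rhoc\in(0,1)$. Then \eqref{eq:XY} reads
\begin{equation*}
X=-\tfrac{4\pi}{3}\rhoc\,x(1-4x),\qquad Y=\tfrac{8\pi}{3}\rhoc\,x(1-x).
\end{equation*}
Both are quadratic polynomials in $x$, and their ranges on the open interval $(0,1)$ are elementary.

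\emph{Step 1: ranges of $X$ and $Y$.} The function $x(1-4x)$ has its maximum $1/16$ at $x=1/8$. Hence $X\ge-\tfrac{4\pi}{3}\cdot\tfrac1{16}\rhoc=-\pi\rhoc/12$. The upward parabola $X(x)$ increases on $(1/8,1)$ towards its endpoint value $X(1)=\tfrac{4\pi}{3}\cdot3\,\rhoc=4\pi\rhoc$. Since $x=1$ is excluded, this gives the strict bound $X<4\pi\rhoc$. For $Y$, positivity is clear on $(0,1)$, and the maximum of $x(1-x)$ is $1/4$ at $x=1/2$. This gives $Y\le 2\pi\rhoc/3$, attained exactly at $\rho=\lambda$.

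\emph{Step 2: the invariants.} Insert these ranges into \eqref{eq:invariants} via the triangle inequality, using $|X|<4\pi\rhoc$ and $|Y|\le 2\pi\rhoc/3$:
\begin{equation*}
|\Ric|\le6(|X|+|Y|)<6\bigl(4\pi+\tfrac{2\pi}{3}\bigr)\rhoc=28\pi\rhoc .
\end{equation*}
In the same way,
\begin{equation*}
12(X^2+|X||Y|+Y^2)<12\pi^2\rhoc^2\bigl(16+\tfrac83+\tfrac49\bigr)=\tfrac{688}{3}\pi^2\rhoc^2,\qquad 12(X^2+Y^2)<12\pi^2\rhoc^2\bigl(16+\tfrac49\bigr)=\tfrac{592}{3}\pi^2\rhoc^2 .
\end{equation*}
Strictness in each case comes from the strict bound on $|X|$. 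These bounds are deliberately crude, since they do not use the fact that the extrema of $X$ and $Y$ occur at different $x$, but that is all the statement claims.

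\emph{Step 3: values at the bounce.} Adding $X$ and $Y$ gives
\begin{equation*}
\Ric=6(X+Y)=8\pi\rhoc\,x\bigl[-(1-4x)+2(1-x)\bigr]/2\cdot 2=8\pi\rho(1+2\rho/\rhoc),
\end{equation*}
which holds at every instant and in particular at $\rho=\rho_b$. For the saturating cloud, \Cref{prop:cycle}(c) gives $\rho_b=\lambda$, i.e.\ $x=1/2$. Then $X=Y=4\pi\lambda/3$, and substitution into \eqref{eq:invariants} yields $\Ric_b=16\pi\lambda$, $R_{\mu\nu}R^{\mu\nu}=36\,(4\pi\lambda/3)^2=64\pi^2\lambda^2$ and $R_{\alpha\beta\gamma\delta}R^{\alpha\beta\gamma\delta}=24\,(4\pi\lambda/3)^2=\tfrac{128}{3}\pi^2\lambda^2$.

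There is no real obstacle here. The one point requiring care is the strictness of the upper bounds. It relies on \Cref{prop:bound}(i) excluding $\rho=\rhoc$ along a solution, since at $x=1$ the bound on $X$ would be attained. I would also verify the kinematic identities \eqref{eq:invariants} once for $k=+1$, since all three invariant bounds depend on them.
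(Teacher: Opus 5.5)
Your proposal is correct and follows the paper's proof essentially step for step. Both arguments take the ranges of $X$ and $Y$ from the two quadratics on $0<\rho<\rhoc$, bound the invariants crudely using $|X|<4\pi\rhoc$ and $0<Y\le 2\pi\rhoc/3$, and substitute $X_b=Y_b=4\pi\lambda/3$ at $\rho_b=\lambda$ for the saturating cloud. Your remark that $\Ric=8\pi\rho(1+2\rho/\rhoc)$ holds at every instant, not only at the bounce, is accurate; the stray factor $/2\cdot 2$ in your displayed intermediate step should be dropped, leaving simply $8\pi\rhoc\,x[-(1-4x)+2(1-x)]$.
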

\begin{proof}
By \Cref{prop:bound}(i), $0<\rho<\rhoc$. The polynomial $\rho(1-4\rho/\rhoc)$ takes values in $(-3\rhoc,\rhoc/16]$ on this interval, which gives the range of $X$. The polynomial $\rho(1-\rho/\rhoc)$ takes values in $(0,\rhoc/4]$, which gives the range of $Y$. The bounds on the invariants follow from \eqref{eq:invariants} and $|X|<4\pi\rhoc$, $0<Y\le2\pi\rhoc/3$. At the bounce $\dot a=0$, so $Y_b=1/a_-^2$, and $\Ric_b=6(X_b+Y_b)=8\pi\rho_b(1+2\rho_b/\rhoc)$ by \eqref{eq:XY}. For the saturating cloud, $\rho_b=\lambda$ and $X_b=Y_b=4\pi\lambda/3$.
\end{proof}

Every invariant is therefore bounded by a constant of order $\lambda$, $\lambda^2$ along the entire cycle, not only at the turning point, and each diverges only in the limit $\lambda\to\infty$, where \eqref{eq:XY} reduces to the OS behaviour $X,Y\propto\rho\to\infty$.

\begin{proposition}[Completeness of comoving worldlines]\label{prop:complete}
For $\Xmax>1$ the solution $a(\tau)$ extends to a smooth periodic function on $\tau\in(-\infty,\infty)$ with $a_-\le a(\tau)\le a_+$. The comoving worldlines $\chi,\theta,\phi=\text{const}$ are timelike geodesics with proper time $\tau$ and are complete in both time directions.
\end{proposition}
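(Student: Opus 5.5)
The plan is to work with the second-order autonomous equation $\ddot a=\tfrac12 g'(a)$ from \eqref{eq:potential}, rather than the first-order equation $\dot a^2=g(a)-1$, whose square root is not Lipschitz at the turning points. On $a>0$ the function $g$ is smooth, so $g'$ is locally Lipschitz. The initial value problem with data $a(0)=a_-$, $\dot a(0)=0$ therefore has a unique smooth maximal solution. Along it the energy $E=\dot a^2-g(a)$ is conserved and equals $-1$. Hence $g(a(\tau))\ge1$ for all $\tau$, and by \Cref{prop:cycle}(c) the set $\{g\ge1\}$ is the interval $[a_-,a_+]$. The solution therefore stays in $[a_-,a_+]$, a compact subset of $(0,\infty)$ on which $g$ and $g'$ are bounded. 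Since $\dot a^2=g-1\le\Xmax-1$, both $a$ and $\dot a$ stay bounded. The standard extension theorem for ODEs (no blow-up in finite time) then gives a maximal interval equal to all of $\mathbb R$.

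Periodicity comes next. Since $g'(a_-)>0$, the solution leaves $a_-$ with $\ddot a>0$. While $\dot a>0$ we can invert $\tau(a)=\int_{a_-}^{a}\dd a'/\sqrt{g(a')-1}$, and this integral is finite up to $a_+$ because both roots are simple; \Cref{prop:cycle}(c) already records the convergence of \eqref{eq:period}. The solution therefore reaches $a_+$ at time $T/2$ with $\dot a=0$ and $\ddot a<0$. The equation is invariant under $\tau\mapsto-\tau$, so uniqueness gives $a(\tau)=a(-\tau)$, and likewise reflection symmetry about $T/2$. Combining the two reflections gives $a(\tau+T)=a(\tau)$. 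Any other solution with $\Xmax>1$ is a time translate of this one, since it must pass through $a_-$ by the same argument. The bound $a_-\le a\le a_+$ is exactly the confinement established above.

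For the geodesic statement, I will compute the Christoffel symbols of \eqref{eq:interior}. With $u=\partial_\tau$, the only potentially nonzero components of $u^\nu\nabla_\nu u^\mu$ are $\Gamma^\mu_{\tau\tau}$, and these vanish because $g_{\tau\tau}=-1$ is constant and $g_{\tau i}=0$. Equivalently, $\partial_\tau$ is the unit normal to the homogeneous slices, which are orthogonal to a geodesic congruence. So the comoving curves are timelike geodesics affinely parametrised by proper time $\tau$. Completeness in both directions follows at once, because the affine parameter $\tau$ ranges over all of $\mathbb R$ with the metric smooth and nondegenerate ($a\ge a_->0$) along the whole curve. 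Smoothness of the metric is guaranteed here because, from the first paragraph, $a(\tau)$ is smooth and bounded away from zero.

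The main obstacle is the turning points, where uniqueness fails for the first-order equation. The route through the second-order equation plus energy conservation sidesteps this, so this step is the only one needing care. A secondary point is to note that the coordinate singularities of $\chi$ at $0,\pi$ are irrelevant: they are polar-coordinate artifacts, and the comoving geodesics through them are still just $\partial_\tau$ integral curves.
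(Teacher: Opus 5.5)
Your proposal is correct and follows the same route as the paper: global existence for $\ddot a=g'(a)/2$ comes from the smooth right side together with confinement of the orbit to the compact set $[a_-,a_+]\subset(0,\infty)$, and the geodesic property comes from $\Gamma^{\mu}{}_{\tau\tau}=0$. You also spell out two steps the paper leaves implicit or delegates to \Cref{prop:cycle}(c) and \Cref{lem:sym}: confinement via the conserved quantity $\dot a^2-g(a)=-1$, and periodicity via the reflection symmetries about $\tau=0$ and $\tau=T/2$.
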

\begin{proof}
The equation $\ddot a=g'(a)/2$ has a smooth right side on $a>0$, and the orbit stays in the compact set $[a_-,a_+]\subset(0,\infty)$, so the solution exists for all $\tau$. In the metric \eqref{eq:interior}, $\Gamma^{\mu}{}_{\tau\tau}=0$, so the curves are geodesics parametrised by $\tau$.
\end{proof}
This statement concerns the radial comoving congruence. Completeness for non-comoving geodesics is not shown.

Figure~\ref{fig:two} shows the scale factor and the three invariants for $\rho_0=\lambda$.

\begin{figure}[t]
\centering
\includegraphics[width=0.96\textwidth]{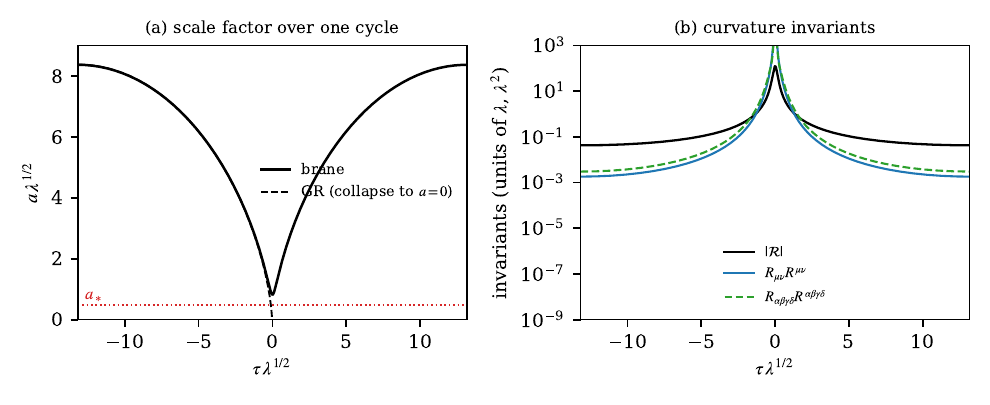}
\caption{(a) Scale factor over one cycle for $\rho_0=\lambda$ (solid, units $\lambda=1$), compared with the general relativistic collapse from the same maximal radius (dashed), which reaches $a=0$. The two curves agree until the density becomes comparable with $\rhoc$. (b) The three curvature invariants along the cycle, with the maximum at the bounce.}
\label{fig:two}
\end{figure}

\subsection{Mass of the cloud}

The comoving boundary of the cloud is $\chi=\chi_\Sigma$ with $0<\chi_\Sigma<\pi/2$, which means that the cloud occupies less than a hemisphere of the closed three-space. This condition is needed for the exterior areal radius to increase outward. The areal radius of the boundary and the conserved mass are
\begin{equation}
R_\Sigma(\tau)=a(\tau)\sin\chi_\Sigma,\qquad M\equiv\frac{4\pi}{3}\rho_0\sin^{3}\chi_\Sigma .
\label{eq:Mdef}
\end{equation}
$M$ is the Oppenheimer--Snyder mass, $2M=(8\pi/3)\rho R_\Sigma^{3}$, and it is conserved because $\rho R_\Sigma^3=\rho_0\sin^3\chi_\Sigma$. We also define
\begin{equation}
\kappa\equiv\frac{8\pi}{3}\frac{\rho_0^{2}\sin^{6}\chi_\Sigma}{\rhoc}=\frac{3M^{2}}{4\pi\lambda}.
\label{eq:kappa}
\end{equation}

\section{The exterior and the junction conditions}\label{sec:exterior}

\subsection{Trace-free exteriors are charged Vaidya}

We take the exterior to be spherically symmetric and of Bondi form with $g_{uR}=-1$, the class used in earlier studies of radiating collapse on the brane \cite{Govender2002}. It contains the static charged solution as well as the retarded and advanced Vaidya geometries. Exteriors with a further metric function multiplying $\dd u\,\dd R$ are not considered. In the retarded form,
\begin{equation}
ds^{2}=-f(u,R)\,\dd u^{2}-2\,\dd u\,\dd R+R^{2}\dd\Omega^{2},\qquad f=1-\frac{2m(u,R)}{R},
\label{eq:Bondi}
\end{equation}
a direct computation gives (Appendix~\ref{app:A})
\begin{equation}
\Ric=\frac{2}{R^{3}}\,\partial_R\!\bigl(R^{2}\partial_R m\bigr).
\end{equation}
By \eqref{eq:vacuum} the exterior has $\Ric=0$, so $R^2\partial_Rm$ is a function of $u$ only, $\partial_R m=c(u)/R^{2}$ for some $c(u)$, and $m=m(u)-q(u)/(2R)$ with $q=2c$. Therefore
\begin{equation}
f(u,R)=1-\frac{2m(u)}{R}+\frac{q(u)}{R^{2}},
\label{eq:chargedVaidya}
\end{equation}
the charged Vaidya metric. Its Einstein tensor is
\begin{equation}
G_{\mu\nu}=\frac{q}{R^{4}}\bigl(-h^{(2)}_{\mu\nu}+\Omega_{\mu\nu}\bigr)+\Theta\,k_\mu k_\nu,\qquad \Theta\equiv\frac{q_u-2R\,m_u}{R^{3}},\quad k_\mu=\partial_\mu u,
\label{eq:Gvaidya}
\end{equation}
where $h^{(2)}=-f\dd u^2-2\dd u\dd R$, $\Omega=R^2\dd\Omega^2$, and subscript $u$ denotes $\partial_u$. The first term is of Maxwell type with charge parameter $q$. The second is a null flux with $k^\mu k_\mu=0$. Since $\Ric=0$ identically, \eqref{eq:vacuum} is satisfied with $\mathcal E_{\mu\nu}=-G_{\mu\nu}$, which is trace-free and, by the Bianchi identity, conserved. The Kretschmann scalar is $R_{\alpha\beta\gamma\delta}R^{\alpha\beta\gamma\delta}=48m^2/R^6-96mq/R^7+56q^2/R^8$.

\subsection{Junction conditions}

Let $\Sigma$ be the timelike hypersurface $\chi=\chi_\Sigma$ of the interior, with induced metric $-\dd\tau^2+R_\Sigma(\tau)^2\dd\Omega^2$. From the exterior it is the curve $(u(\tau),R_\Sigma(\tau))$, with unit tangent $U^\mu=(\dot u,\dot R)$, where an overdot is $\dd/\dd\tau$. Continuity of the induced metric requires
\begin{equation}
f\,\dot u^{2}+2\dot u\dot R=1 ,
\label{eq:first}
\end{equation}
evaluated at $R=R_\Sigma$. The outward unit normal is $n_\mu=(-\dot R,\dot u)$, with $n^R=1/\dot u-\dot R$.

We assume that $\Sigma$ carries no effective surface layer \cite{Israel1966,Bruni2001}, so that the extrinsic curvature of $\Sigma$ must be continuous. For a spherically symmetric surface this has three components. In the interior, $K_{\tau\tau}=0$ and $K^\theta{}_\theta=\cot\chi_\Sigma/a$. In the exterior, $K^\theta{}_\theta=n^R/R_\Sigma$. The Codazzi equation links $K^\tau{}_\tau$ to $\partial_\tau K^\theta{}_\theta$, to $(\dot R/R)(K^\tau{}_\tau-K^\theta{}_\theta)$ and to the momentum flux $G_{\mu\nu}n^\mu U^\nu$. Hence, wherever $\dot R\ne0$, continuity of $K^\tau{}_\tau$ is equivalent to continuity of the momentum flux, which vanishes in the interior because the interior effective fluid is isotropic and at rest in the comoving frame. At the isolated instants with $\dot R=0$ continuity follows from continuity in time. The conditions are therefore:

\begin{enumerate}
\item[(J1)] $f\dot u^2+2\dot u\dot R=1$,
\item[(J2)] $n^R=\cos\chi_\Sigma$ on $\Sigma$, from $K^\theta{}_\theta$ with $R_\Sigma=a\sin\chi_\Sigma$,
\item[(J3)] $G_{\mu\nu}n^\mu U^\nu=0$ on $\Sigma$.
\end{enumerate}

Equation (J2) and (J1) give
\begin{equation}
\dot u=\frac{1}{\cos\chi_\Sigma+\dot R_\Sigma},\qquad f\big|_\Sigma\equiv f_\Sigma=\cos^{2}\chi_\Sigma-\dot R_\Sigma^{2}=1-\sin^{2}\chi_\Sigma\,(1+\dot a^{2}).
\label{eq:udot}
\end{equation}
Using \eqref{eq:Friedmann}, $\sin^2\chi_\Sigma(1+\dot a^2)=(8\pi/3)\rho_{\rm eff}R_\Sigma^2$, so that
\begin{equation}
f_\Sigma=1-\frac{8\pi}{3}\rho_{\rm eff}R_\Sigma^{2}\qquad\Longleftrightarrow\qquad 2m-\frac{q}{R_\Sigma}=\frac{8\pi}{3}\rho_{\rm eff}R_\Sigma^{3}=2M-\frac{\kappa}{R_\Sigma^{3}},
\label{eq:A}
\end{equation}
where we used $\rho R_\Sigma^3=\rho_0\sin^3\chi_\Sigma$ and \eqref{eq:Mdef}, \eqref{eq:kappa}. The left side of the second equation in \eqref{eq:A} is twice the Misner--Sharp mass of the exterior evaluated on $\Sigma$. For the charged Vaidya metric, (J3) evaluates to
\begin{equation}
G_{\mu\nu}n^{\mu}U^{\nu}=\frac{\dot u^{2}}{R^{3}}\bigl(2R\,m_u-q_u\bigr)=0 .
\label{eq:J3}
\end{equation}
Since $\dot u>0$ in the chart, this is $\Theta|_\Sigma=0$: the null flux must vanish on the boundary.

\subsection{Solution}

\begin{proposition}[Exterior of the cloud]\label{prop:exterior}
Let $0<\chi_\Sigma<\pi/2$ and consider an interval of $\tau$ on which $\dot a>-\cot\chi_\Sigma$ (so that $\dot u>0$ in \eqref{eq:udot}). Then the junction conditions (J1) to (J3) with an exterior of the form \eqref{eq:chargedVaidya} have a unique solution,
\begin{equation}
m(u)=M+\frac{\kappa}{R_\Sigma^{3}}=M\Bigl(1+\frac{2\rho}{\rhoc}\Bigr),\qquad q(u)=\frac{3\kappa}{R_\Sigma^{2}}=\frac{9M^{2}}{4\pi\lambda R_\Sigma^{2}}=\frac{6M\rho R_\Sigma}{\rhoc},
\label{eq:solution}
\end{equation}
where $R_\Sigma=R_\Sigma(u)$ and $\rho=\rho_0/a^3$ is the density of the cloud at the retarded time $u$. The null flux off the boundary is
\begin{equation}
\Theta(u,R)=\frac{6\kappa}{R^{3}R_\Sigma^{4}}\frac{\dd R_\Sigma}{\dd u}\,\bigl(R-R_\Sigma\bigr),\qquad \frac{\dd R_\Sigma}{\dd u}=\dot R_\Sigma\bigl(\cos\chi_\Sigma+\dot R_\Sigma\bigr).
\label{eq:Theta}
\end{equation}
Moreover $G_{\mu\nu}n^\mu n^\nu=8\pi p_{\rm eff}=-8\pi\rho^2/\rhoc$ holds on $\Sigma$, in agreement with the interior.
\end{proposition}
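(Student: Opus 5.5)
The plan is to turn (J1)–(J3) into two equations for the two unknown functions $m(u)$ and $q(u)$ along $\Sigma$, and solve them. Equation \eqref{eq:A} already encodes (J1) and (J2): it gives one algebraic relation,
\[
2m-\frac{q}{R_\Sigma}=2M-\frac{\kappa}{R_\Sigma^{3}} .
\]
By \eqref{eq:J3}, condition (J3) is a differential relation, $q_u=2R_\Sigma m_u$. The hypothesis $\dot a>-\cot\chi_\Sigma$ gives $\dot u>0$ through \eqref{eq:udot}. Hence $u$ is a good parameter along $\Sigma$, and we may write $m,q$ as functions of $R_\Sigma$ wherever $\dd R_\Sigma/\dd u\neq0$. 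The isolated turning instants are handled afterwards by continuity.

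Differentiate the algebraic relation with respect to $u$:
\[
2m_u-\frac{q_u}{R_\Sigma}+\frac{q}{R_\Sigma^{2}}\frac{\dd R_\Sigma}{\dd u}=\frac{3\kappa}{R_\Sigma^{4}}\frac{\dd R_\Sigma}{\dd u}.
\]
Substituting $q_u=2R_\Sigma m_u$ cancels the first two terms. Where $\dd R_\Sigma/\dd u\neq0$ this leaves $q=3\kappa/R_\Sigma^2$. Inserting this back into the algebraic relation gives $2m=2M-\kappa/R_\Sigma^3+3\kappa/R_\Sigma^3$, so $m=M+\kappa/R_\Sigma^3$. Conversely, these expressions satisfy $q_u=-6\kappa R_\Sigma^{-3}\dd R_\Sigma/\dd u=2R_\Sigma m_u$, so (J3) holds, and existence is immediate.

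The alternative forms follow from $\kappa=3M^2/(4\pi\lambda)$ and $M=\tfrac{4\pi}{3}\rho R_\Sigma^3$. These give $\kappa/R_\Sigma^3=2M\rho/\rhoc$ and $3\kappa/R_\Sigma^2=6M\rho R_\Sigma/\rhoc$. For uniqueness, the derivation forces these values on the dense open set where $\dot R_\Sigma\neq0$. Since $\dot R_\Sigma=\dot a\sin\chi_\Sigma$, its zeros are the isolated turning points of \Cref{prop:cycle}, and continuity of $m,q$ in $u$ extends the values there.

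The flux $\Theta$ is evaluated off $\Sigma$ at fixed $u$ with $R$ free, using $q_u-2Rm_u$. From the closed forms, $q_u=-6\kappa R_\Sigma^{-3}R_\Sigma'$ and $m_u=-3\kappa R_\Sigma^{-4}R_\Sigma'$, so
\[
q_u-2Rm_u=\frac{6\kappa R_\Sigma'}{R_\Sigma^{4}}\bigl(R-R_\Sigma\bigr),
\]
and dividing by $R^3$ gives \eqref{eq:Theta}. The chain rule $\dd R_\Sigma/\dd u=\dot R_\Sigma/\dot u$, together with \eqref{eq:udot}, gives $\dot R_\Sigma(\cos\chi_\Sigma+\dot R_\Sigma)$.

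The last claim is the step I expect to be the main obstacle, because it is where sign conventions can go wrong. The claim is that $G_{\mu\nu}n^\mu n^\nu=-8\pi\rho^2/\rhoc$ on $\Sigma$. I would use \eqref{eq:Gvaidya}. The null-flux term drops because $\Theta|_\Sigma=0$, leaving $(q/R^4)(-h^{(2)}_{\mu\nu}n^\mu n^\nu)$. Since $n$ is a unit vector in the $(u,R)$ plane, $h^{(2)}(n,n)=1$, so this equals $-q/R_\Sigma^4$. Substituting $q=6M\rho R_\Sigma/\rhoc$ and $M=\tfrac{4\pi}{3}\rho R_\Sigma^3$ gives $-8\pi\rho^2/\rhoc$. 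On the interior side, $G_{\mu\nu}n^\mu n^\nu=8\pi p_{\rm eff}$ by \eqref{eq:effective}, so the two agree.

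Two things need checking here. First, $\Omega_{\mu\nu}n^\mu n^\nu=0$, which holds because $n$ has no angular components. Second, the sign of the Maxwell-type term in \eqref{eq:Gvaidya}: $G_{RR}$-type components should be negative for positive $q$, consistent with a radial tension. This agreement is also a consistency check that the remaining Codazzi/Gauss content is satisfied, rather than an extra condition.
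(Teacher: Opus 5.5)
Your proposal is correct and follows the paper's proof. Both combine the algebraic relation \eqref{eq:A} with (J3) in the form $q_u=2R_\Sigma m_u$, which fixes $m$ and $q$ wherever $\dot R_\Sigma\neq0$, extend to the isolated turning points by continuity, and then verify $\Theta$ and $G_{\mu\nu}n^\mu n^\nu=-q/R_\Sigma^4$ directly. The only difference is cosmetic: you differentiate \eqref{eq:A} to isolate $q$ first, whereas the paper solves \eqref{eq:A} for $q$ and substitutes into (J3) to isolate $m$ first.
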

\begin{proof}
Solve \eqref{eq:A} for $q=2(m-M)R_\Sigma+\kappa/R_\Sigma^2$ and insert it into $\dot q-2R_\Sigma\dot m=0$, which is \eqref{eq:J3} multiplied by $\dot u$. This gives $2(M+\kappa/R_\Sigma^{3}-m)\dot R_\Sigma=0$. For $\dot R_\Sigma\ne0$ this fixes $m$, and by continuity it holds also at isolated points where $\dot R_\Sigma=0$. Then $q$ follows from \eqref{eq:A}. Because $\dot u>0$ the map $\tau\mapsto u$ is invertible, so $m$ and $q$ are well-defined functions of $u$. The formula for $\Theta$ follows by inserting $m_u=-3\kappa\,(\dd R_\Sigma/\dd u)/R_\Sigma^4$ and $q_u=-6\kappa\,(\dd R_\Sigma/\dd u)/R_\Sigma^3$. The identity $\kappa/R_\Sigma^3=M\rho/\lambda$ gives the second form of $m$. The normal-normal component is checked directly: it equals $-q/R^4+8\pi\Theta(n^u)^2$ evaluated on $\Sigma$, where $\Theta=0$, and $q/R_\Sigma^4=8\pi\rho^2/\rhoc$ by \eqref{eq:solution}. (This equality also follows from the Gauss equation, which relates $G_{\mu\nu}n^\mu n^\nu$ to $K$ and the intrinsic curvature of $\Sigma$.)
\end{proof}

A symbolic verification of all three junction conditions with \eqref{eq:solution} is summarised in Appendix~\ref{app:B}. The same functions solve the junction problem for the advanced form $ds^2=-f\,\dd v^2+2\,\dd v\,\dd R+R^2\dd\Omega^2$, with $\dot v=1/(\cos\chi_\Sigma-\dot R_\Sigma)$, which is regular as long as $\dot R_\Sigma<\cos\chi_\Sigma$.

\subsection{Consequences}

Since $\kappa>0$ for $\lambda>0$, the tidal charge $q$ is positive. If the sign of the quadratic term were reversed, as in RS II, $\kappa$ and hence $q$ would be negative, which agrees with the negative tidal charge of RS II black holes \cite{Dadhich2000}. The positive sign is a result of the junction conditions with the SS correction. It does not follow from a separate argument about gravity confinement. The tidal charge is not a free constant of the exterior: it is fixed by the cloud through $q=3\kappa/R_\Sigma^2$ and varies in time.

The exterior also cannot be static. A static exterior of the whole boundary worldtube in the class \eqref{eq:chargedVaidya} would have constant $m$ and $q$, whereas \eqref{eq:solution} gives $m_u=-3\kappa R_\Sigma^{-4}\,\dd R_\Sigma/\dd u\ne0$ whenever the boundary moves. A homogeneous dust cloud on the SS brane therefore cannot have a static exterior within this class, in the same way as on the RS brane \cite{Bruni2001}. In the limit $\lambda\to\infty$, $\kappa\to0$, $q\to0$ and $m\to M$, and one recovers the OS matching to Schwarzschild.

The solution carries an effective mass that differs from $M$. The mass function at large $R$, the Bondi mass, is $m(u)=M(1+2\rho/\rhoc)$. It equals $M$ for a dilute cloud and exceeds $M$ when the density is high, up to $M(1+2\rho_b/\rhoc)<3M$ at the bounce. The Misner--Sharp mass of the interior at the boundary is $M(1-\rho/\rhoc)$. The difference is carried by the effective Weyl sector.

The direction of the effective flux depends on the chart. For the retarded exterior, \eqref{eq:Theta} shows that $\Theta$ has the sign of $\dot R_\Sigma$ for $R>R_\Sigma$. During expansion $\Theta>0$ and the Bondi mass decreases, as for outgoing radiation of positive energy. During contraction $\Theta<0$ and the Bondi mass increases, so the effective flux violates the null energy condition. For the advanced exterior the roles are reversed: the null coefficient is
\begin{equation}
\Theta=\frac{2R\,m_v-q_v}{R^{3}}=-\frac{6\kappa\,\dot R_\Sigma(\cos\chi_\Sigma-\dot R_\Sigma)(R-R_\Sigma)}{R^{3}R_\Sigma^{4}},
\end{equation}
which is non-negative during contraction and negative during expansion. The combination that satisfies the effective null energy condition throughout is advanced for contraction and retarded for expansion, and it is also the combination that remains regular across the trapped phases (\cref{sec:global}). The flux is not a matter field. It is the projected bulk Weyl tensor, and whether it is directed inward from past null infinity or outward is a boundary condition of the bulk problem that the brane equations do not fix.

\subsection{Trapping and exterior horizons}\label{sec:trapping}

The quantity $f_\Sigma=1-\sin^2\chi_\Sigma(1+\dot a^2)$ in \eqref{eq:udot} is $\nabla_\mu R\nabla^\mu R$ on $\Sigma$, so $\Sigma$ is trapped when $f_\Sigma<0$, that is when $\dot a^2>\cot^2\chi_\Sigma$.

\begin{proposition}[Trapping criterion]\label{prop:trap}
For $\Xmax\ge1$ the maximum of $\dot a^2$ over a cycle is $\Xmax-1$, attained at $a=a_m$. Hence $\Sigma$ is untrapped for the whole cycle if and only if $\Xmax\sin^2\chi_\Sigma<1$. Moreover
\begin{equation}
\Xmax\sin^{2}\chi_\Sigma=\Bigl(\frac{9\pi\lambda M^{2}}{4}\Bigr)^{1/3},
\label{eq:trapcrit}
\end{equation}
so a trapped region forms on $\Sigma$ if and only if
\begin{equation}
M>M_c\equiv\frac{2}{3\sqrt{\pi\lambda}} ,
\end{equation}
independently of $\rho_0$ and $\chi_\Sigma$. If $M>M_c$, the equation $g(a)-1=\cot^2\chi_\Sigma$ has two roots $a_-<a_1<a_m<a_2<a_+$, $\Sigma$ is trapped for $a_1<a<a_2$ on each half-cycle, and it is untrapped at the bounce, where $f_\Sigma=\cos^2\chi_\Sigma$.
\end{proposition}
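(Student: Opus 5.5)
The plan is to work with the one-dimensional reduction \eqref{eq:potential}, where $\dot a^2=g(a)-1$. The maximum of $\dot a^2$ along a cycle is therefore the maximum of $g$ on the accessible interval $[a_-,a_+]$. From the shape of $g$ recorded after \eqref{eq:scales}, $g$ has its global maximum $\Xmax$ at $a_m$. By \Cref{prop:cycle}(c) (or (b) when $\Xmax=1$), $a_m$ lies in $[a_-,a_+]$ and is visited on every half-cycle. Hence $\max\dot a^2=\Xmax-1$, attained at $a=a_m$.

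The trapping condition $f_\Sigma<0$, i.e. $\dot a^2>\cot^2\chi_\Sigma$, can be rewritten as $1+\dot a^2>1/\sin^2\chi_\Sigma$. Since $\max(1+\dot a^2)=\Xmax$, $\Sigma$ is never trapped if and only if $\Xmax\sin^2\chi_\Sigma\le1$. The strict form in the statement follows if one regards the marginal case $f_\Sigma=0$ as the threshold, and I would note this boundary convention explicitly.

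For \eqref{eq:trapcrit} I substitute $\rho_0=3M/(4\pi\sin^3\chi_\Sigma)$ from \eqref{eq:Mdef} into $\Xmax=2\pi(\rho_0^2\lambda/2)^{1/3}$. This gives $\Xmax\sin^2\chi_\Sigma=2\pi\bigl(\rho_0^2\sin^6\chi_\Sigma\,\lambda/2\bigr)^{1/3}=2\pi\bigl(9M^2\lambda/(32\pi^2)\bigr)^{1/3}$. Combining the constants, $(2\pi)^3\cdot 9/(32\pi^2)=9\pi/4$, which yields $(9\pi\lambda M^2/4)^{1/3}$. The dependence on $\chi_\Sigma$ cancels, which is the key point. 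Setting this quantity greater than $1$ gives $M^2>4/(9\pi\lambda)$, i.e. $M>M_c$.

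For the root structure, take $M>M_c$, so that $\Xmax-1>\cot^2\chi_\Sigma$. The function $g-1$ equals $0$ at $a_\pm$, is strictly increasing on $[a_-,a_m]$, and is strictly decreasing on $[a_m,a_+]$. By the intermediate value theorem and monotonicity, the level $\cot^2\chi_\Sigma>0$ is attained exactly once on each side, at $a_1\in(a_-,a_m)$ and $a_2\in(a_m,a_+)$. Outside $[a_-,a_+]$ we have $g-1<0$, so there are no further roots. Trapping, $g(a)-1>\cot^2\chi_\Sigma$, then holds exactly for $a_1<a<a_2$, on both the contracting and the expanding half-cycle. At the bounce $a=a_-$ we have $\dot a=0$, so $f_\Sigma=1-\sin^2\chi_\Sigma=\cos^2\chi_\Sigma>0$, and $\Sigma$ is untrapped there.

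The only delicate step is the algebra for \eqref{eq:trapcrit}; the rest follows directly from the established monotonicity of $g$.
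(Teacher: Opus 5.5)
Your proposal is correct and takes the same route as the paper. The paper's proof is a three-line sketch: $g-1$ is maximal at $a_m$, the identity \eqref{eq:trapcrit} follows from substituting $\rho_0^{2/3}\sin^2\chi_\Sigma=(3M/4\pi)^{2/3}$ into $\Xmax$, and the roots $a_1,a_2$ follow from the monotonicity of $g$ on either side of $a_m$. You fill in those steps in more detail, and your remark on the strict inequality is right. Reading ``untrapped'' as $f_\Sigma>0$ gives exactly $\Xmax\sin^2\chi_\Sigma<1$. The case $M=M_c$ is then marginally trapped only at the single instant $a=a_m$.
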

\begin{proof}
$\dot a^2=g-1$ is maximal at $a_m$. The identity \eqref{eq:trapcrit} follows from \eqref{eq:scales} and \eqref{eq:Mdef}, since $\rho_0^{2/3}\sin^2\chi_\Sigma=(3M/4\pi)^{2/3}$. The statement on the roots follows from the shape of $g$.
\end{proof}

The criterion depends only on $M\sqrt\lambda$. In general relativity ($\lambda\to\infty$) $M_c\to0$ and every cloud traps. For the bounce itself, $\dot R_\Sigma=0$ implies $f_\Sigma=\cos^2\chi_\Sigma>0$, so the bounce always occurs on an untrapped portion of $\Sigma$. This is consistent with the general constraints on bouncing collapse that a bounce cannot occur strictly inside a trapped region of the boundary \cite{BenAchour2020}.

For untrapped clouds, $M<M_c$, the quantity $\dot u$ is bounded and positive over the full cycle, and the retarded chart covers the exterior of $\Sigma$ at all times. If in addition
\begin{equation}
\text{(H)}\qquad f(u,R)>0\quad\text{for all }R\ge R_\Sigma(u),
\end{equation}
that is, there is no horizon outside the cloud, then all curvature scalars of the exterior are bounded: the Ricci scalar vanishes, $R_{\mu\nu}R^{\mu\nu}=4q^2/R^8$ and the Kretschmann scalar is given below \eqref{eq:Gvaidya}, with $m$, $q$ bounded and $R\ge R_{\Sigma,\min}>0$. Condition (H) is not implied by $M<M_c$. It must be checked for the cloud considered, as in \cref{sec:num}.

For trapped clouds, $M>M_c$, the retarded chart fails when $\dot R_\Sigma=-\cos\chi_\Sigma$, where $\dot u\to\infty$, and the advanced chart fails when $\dot R_\Sigma=\cos\chi_\Sigma$. Each is regular on the half of the cycle that is relevant to it, and a single exterior covering the whole cycle is constructed in \cref{sec:global}.

\section{Global exterior of a trapped cloud}\label{sec:global}

For $M>M_c$ the boundary becomes trapped during the cycle, and neither the retarded nor the advanced chart of \cref{sec:exterior} covers the whole evolution. In this section we construct an exterior spacetime that does, we establish its regularity and energy conditions, and we describe its causal structure. Throughout, $M>M_c$ and $\Xmax>1$.

\subsection{Time symmetry and the trapped interval}

\begin{lemma}[Time symmetry]\label{lem:sym}
Fix the origin of $\tau$ at the bounce, $a(0)=a_-$. Then $a(-\tau)=a(\tau)$, and $\dot a<0$ for $-T/2<\tau<0$, $\dot a>0$ for $0<\tau<T/2$, where $a(\pm T/2)=a_+$.
\end{lemma}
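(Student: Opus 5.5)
The plan is to use the autonomous second-order equation $\ddot a=\tfrac12 g'(a)$ from \eqref{eq:potential}, together with uniqueness of solutions to the initial value problem. By \Cref{prop:cycle}(c), and since $\Xmax>1$ throughout this section, the solution oscillates between the simple roots $a_-<a_+$. We are free to choose the origin of $\tau$ at an instant where $a=a_-$. There $\dot a(0)=0$, because $g(a_-)=1$.

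The symmetry follows from reversibility. Define $b(\tau)\equiv a(-\tau)$. Then $\dot b(\tau)=-\dot a(-\tau)$ and $\ddot b(\tau)=\ddot a(-\tau)=\tfrac12 g'(b(\tau))$, so $b$ solves the same equation. It also has the same initial data: $b(0)=a_-$ and $\dot b(0)=0$. The right side $g'(a)/2$ is smooth, hence locally Lipschitz, on $a>0$, and the orbit stays in $[a_-,a_+]\subset(0,\infty)$. Picard--Lindel\"of uniqueness together with the global existence in \Cref{prop:complete} then gives $b\equiv a$, that is, $a(-\tau)=a(\tau)$.

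For the sign of $\dot a$, we first note that $\ddot a(0)=g'(a_-)/2>0$ by \Cref{prop:cycle}(c). Hence $\dot a>0$ on some interval $(0,\epsilon)$. Next, let $\tau_1>0$ be the first zero of $\dot a$ after $0$. It exists and is finite because the motion is periodic and the time needed to reach $a_+$, namely $\int_{a_-}^{a_+}\dd a/\sqrt{g-1}=T/2$, is finite.

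On $(0,\tau_1)$ we have $\dot a=+\sqrt{g(a)-1}>0$. The function $a$ is therefore strictly increasing there, and by the change of variables $\tau=\int_{a_-}^{a}\dd a'/\sqrt{g(a')-1}$ it reaches the next zero of $g-1$. Since $g>1$ on $(a_-,a_+)$ by the shape of $g$, that zero is $a_+$, and it is reached at time $\tau_1=T/2$ by \eqref{eq:period}. This gives $\dot a>0$ on $(0,T/2)$ and $a(T/2)=a_+$.

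Applying the reflection $a(-\tau)=a(\tau)$, which implies $\dot a(-\tau)=-\dot a(\tau)$, we get $\dot a<0$ on $(-T/2,0)$ and $a(-T/2)=a_+$.

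The only delicate step is the first-passage argument: we must rule out $\dot a$ vanishing at some intermediate $a$ with $a_-<a<a_+$. This follows because $\dot a^2=g(a)-1>0$ strictly on the open interval $(a_-,a_+)$. No zero of $\dot a$ can occur before $a$ reaches $a_+$, and the simple-root condition guarantees that the integral defining $T/2$ converges, so $a_+$ is actually reached in finite time.
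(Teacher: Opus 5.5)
Your proof is correct and follows the paper's argument. The symmetry comes from invariance of $\ddot a=g'(a)/2$ under $\tau\to-\tau$, with the invariant data $a(0)=a_-$, $\dot a(0)=0$ and uniqueness. Monotonicity on each half comes from $\dot a^2=g-1>0$ on $(a_-,a_+)$; you simply spell out the first-passage step and the identification of the half-period with $T/2$, which the paper leaves implicit (rename your first-zero time, since $\tau_1$ already denotes a different instant in \eqref{eq:tauis}).
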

\begin{proof}
Equation $\ddot a=g'(a)/2$ is invariant under $\tau\to-\tau$, and $a(0)=a_-$, $\dot a(0)=0$ are invariant initial data, so uniqueness gives the symmetry. Monotonicity on each half follows from $\dot a^2=g-1>0$ on $(a_-,a_+)$.
\end{proof}

For $M>M_c$ let $a_1<a_2$ be the two roots of $g(a)-1=\cot^2\chi_\Sigma$ in $(a_-,a_+)$ given by \Cref{prop:trap}, and define
\begin{equation}
\tau_i=\int_{a_-}^{a_i}\frac{\dd a}{\sqrt{g(a)-1}},\qquad i=1,2,\qquad 0<\tau_1<\tau_2<\frac T2 .
\label{eq:tauis}
\end{equation}
By \Cref{prop:trap} and \Cref{lem:sym}, $f_\Sigma>0$ for $|\tau|<\tau_1$ and for $\tau_2<|\tau|\le T/2$, while $f_\Sigma<0$ for $\tau_1<|\tau|<\tau_2$, and $f_\Sigma=0$ at $|\tau|=\tau_1,\tau_2$. At $\tau=\tau_i$ one has $\dot R_\Sigma=\cos\chi_\Sigma$, and at $\tau=-\tau_i$ one has $\dot R_\Sigma=-\cos\chi_\Sigma$.

The advanced chart of \cref{sec:exterior} has $\dot v=1/(\cos\chi_\Sigma-\dot R_\Sigma)$ and is regular while $\dot R_\Sigma<\cos\chi_\Sigma$, hence on $-T/2\le\tau<\tau_1$. The retarded chart has $\dot u=1/(\cos\chi_\Sigma+\dot R_\Sigma)$ and is regular on $-\tau_1<\tau\le T/2$ and, on the preceding contraction, on $-T/2\le\tau<-\tau_2$. In particular the advanced chart is regular throughout the contraction and the bounce, and the retarded chart throughout the bounce and the expansion. These are the two intervals we use. The two charts overlap on the untrapped window $|\tau|<\tau_1$ around the bounce.

\subsection{Position of the boundary relative to the apparent horizons}

On the boundary, the functions of \eqref{eq:solution} can be written in terms of the dimensionless density $x=\rho/\rhoc\in(0,1)$ as
\begin{equation}
m=M(1+2x),\qquad q=6MxR_\Sigma=\frac{8\pi\rho_0^{2}\sin^{4}\chi_\Sigma}{\rhoc\,a^{2}},
\label{eq:mqx}
\end{equation}
where the second form of $q$ follows from $\rho R_\Sigma^3=\rho_0\sin^3\chi_\Sigma$ and $M=(4\pi/3)\rho_0\sin^3\chi_\Sigma$. The quantity $q$ is therefore largest at the bounce. The zeros of $f(u,R)$ at a given boundary event are
\begin{equation}
r_\pm=m\pm\sqrt{m^{2}-q},
\label{eq:rpm}
\end{equation}
when $m^2\ge q$. We call these the apparent horizon radii of the exterior at that event.

\begin{proposition}[Horizon radii and the boundary]\label{prop:order}
(i) The discriminant $m^2-q$ is positive at every instant of the cycle if and only if $M>M_c$. For $M<M_c$ it is negative on a non-empty interval of $\tau$ around the instant at which $\rho=\rhoc/4$.

(ii) Let $M>M_c$. Then
\begin{equation}
R_\Sigma>r_+\ \ (\tau_2<|\tau|\le T/2),\qquad r_-<R_\Sigma<r_+\ \ (\tau_1<|\tau|<\tau_2),\qquad R_\Sigma<r_-\ \ (|\tau|<\tau_1).
\end{equation}
In words, the boundary crosses the outer apparent horizon inward at $\tau=-\tau_2$, crosses the inner apparent horizon inward at $\tau=-\tau_1$, bounces at $R_b<r_-$ inside the inner horizon, and crosses the two horizons outward at $\tau=\tau_1$ and $\tau=\tau_2$.
\end{proposition}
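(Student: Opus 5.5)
The plan is to work on the boundary with the single variable $x=\rho/\rhoc$. First I eliminate $R_\Sigma$ in favour of $x$. From $\rho R_\Sigma^3=\rho_0\sin^3\chi_\Sigma$ and \eqref{eq:Mdef}, together with $\rhoc=2\lambda$,
\begin{equation*}
R_\Sigma=\Bigl(\frac{3M}{8\pi\lambda\,x}\Bigr)^{1/3},
\end{equation*}
which is strictly decreasing in $x$. By \Cref{lem:sym}, $x$ is a monotone function of $|\tau|$. It is largest at the bounce and smallest at $a_+$. It sweeps the interval $[x(a_+),x(a_-)]$, which contains $x=1/4$ because $a_-<a_m<a_+$ (\Cref{prop:cycle}).

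With \eqref{eq:mqx}, I write $f(R)=(R^2-2mR+q)/R^2$. The minimum of $R^2 f$ over $R$ is attained at $R=m$, and there $f(m)=(q-m^2)/m^2$. So the discriminant is positive exactly when $m^2/q>1$.

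\textbf{Part (i).} Substituting $m=M(1+2x)$ and $q=6MxR_\Sigma(x)$ gives
\begin{equation*}
\frac{m^2}{q}=\frac{M}{6}\Bigl(\frac{8\pi\lambda}{3M}\Bigr)^{1/3}\,h(x),\qquad h(x)\equiv(1+2x)^2x^{-2/3}.
\end{equation*}
Elementary calculus shows that $h$ has a unique minimum on $x>0$, at $x=1/4$, where $h(1/4)=\tfrac94\,4^{2/3}$. I then check that the minimum value of $m^2/q$ equals $1$ exactly at $M=M_c$ and scales as $(M/M_c)^{2/3}$. This should be the same algebra as \eqref{eq:trapcrit}.

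It follows that for $M>M_c$ we have $m^2>q$ for every $x>0$, so in particular along the whole cycle. For $M<M_c$ we have $m^2<q$ at $x=1/4$, which is attained at $a=a_m$. By continuity the inequality $m^2<q$ persists on an open $\tau$-interval around that instant.

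\textbf{Part (ii).} Take $M>M_c$, so that $r_-<m<r_+$ are real and distinct, with $f=(R-r_+)(R-r_-)/R^2$. The middle case is immediate. On $\tau_1<|\tau|<\tau_2$ we have $f_\Sigma<0$ by \Cref{prop:trap}, and this is equivalent to $r_-<R_\Sigma<r_+$.

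On the untrapped sets, $f_\Sigma>0$ only tells us that $R_\Sigma$ lies outside $[r_-,r_+]$. To decide which side, I compare $R_\Sigma$ with the midpoint $m$. The key observation is that $f(m)=(q-m^2)/m^2<0$. Hence any event with $R_\Sigma=m$ is trapped, so $R_\Sigma\neq m$ throughout both untrapped components.

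Along the boundary, $R_\Sigma(x)$ is strictly decreasing and $m(x)=M(1+2x)$ is strictly increasing. They therefore cross at exactly one value $x_*$, and $R_\Sigma>m$ if and only if $x<x_*$. By the observation above, $x_*$ must correspond to a trapped event, so it lies strictly between $x(a_2)$ and $x(a_1)$. This gives the two remaining cases:
\begin{itemize}
\item On $\tau_2<|\tau|\le T/2$ we have $x<x(a_2)<x_*$, so $R_\Sigma>m$. Combined with $f_\Sigma>0$ this gives $R_\Sigma>r_+$.
\item On $|\tau|<\tau_1$ we have $x>x(a_1)>x_*$, so $R_\Sigma<m$. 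Hence $R_\Sigma<r_-$, including at the bounce.
\end{itemize}
The crossing statements then follow from \Cref{lem:sym}, which says $\dot a<0$ for $\tau<0$ and $\dot a>0$ for $\tau>0$. At each $|\tau|=\tau_i$ we have $f_\Sigma=0$, so $R_\Sigma$ equals $r_+$ at $\tau_2$ and $r_-$ at $\tau_1$.

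The main obstacle is in (ii): on the untrapped components, positivity of $f_\Sigma$ alone does not fix which side of the horizon pair the boundary lies on. The device of comparing with the midpoint $m$ settles this. It rests on $f(m)<0$ and on the opposite monotonicity of $R_\Sigma(x)$ and $m(x)$, which together force the unique crossing $x_*$ into the trapped window. The only purely computational step is confirming that the minimum of $m^2/q$ reproduces $M_c$, and I would check it against \eqref{eq:trapcrit}.
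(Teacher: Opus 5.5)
Your part (i) is correct and follows the paper's argument: both reduce to minimising $(1+2x)^2x^{-2/3}$ at $x=1/4$, and your normalisation $\min m^2/q=(M/M_c)^{2/3}$ checks out.

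Part (ii) takes a different route from the paper, which propagates the sign of $R_\Sigma-r_\pm$ by continuity starting from $\tau=-T/2$. There is a gap at the sentence ``$x_*$ must correspond to a trapped event, so it lies strictly between $x(a_2)$ and $x(a_1)$''. The crossing $x_*$ of $R_\Sigma(x)$ and $m(x)$ is unique on $(0,\infty)$, but nothing you have said places it inside the range $[x_+,x_b]$ actually swept by the cycle. The observation $f(m)<0$ only excludes $x_*$ from the untrapped part of that range. On its own it does not rule out $x_*<x_+$ or $x_*>x_b$, i.e.\ $R_\Sigma-m$ keeping one sign along the whole cycle. If $x_*>x_b$, for instance, your second bullet fails: $\Sigma$ would return to $r_+$ at $\tau=-\tau_1$ and bounce outside the outer horizon.

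What is missing are the two endpoint inequalities $R_\Sigma>m$ at $a=a_+$ and $R_\Sigma<m$ at $a=a_-$.

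\textbf{The $a_+$ end.} This is the estimate used in the paper's proof. From $g(a_+)=1$ one has $a_+=\Mzero(1-x_+)$ with $\Mzero=2M/\sin^3\chi_\Sigma$. Hence $R_\Sigma-m\ge 2M(1-x_+)-M(1+2x_+)=M(1-4x_+)>0$, because $x_+<1/4$.

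\textbf{The bounce end.} Write $R_\Sigma/M=\varepsilon x^{-1/3}$ with $\varepsilon^3=3/(8\pi\lambda M^2)$. Then $x_*$ solves $x_*^{1/3}(1+2x_*)=\varepsilon$. Since $(1/4)^{1/3}\cdot\tfrac32=(27/32)^{1/3}$, and $M>M_c$ is equivalent to $\varepsilon^3<27/32$, you get $x_*<1/4<x_b$. The last inequality comes from Proposition~\ref{prop:cycle}.

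With both facts inserted, your midpoint-and-monotonicity argument is complete. It then does more than the paper's proof, which checks only the $a_+$ end and simply asserts that the boundary reaches $r_-$ rather than $r_+$ at $\tau=-\tau_1$. The monotonicity of $R_\Sigma-m$ in $x$, together with $x_*<1/4$, is exactly what justifies that step.
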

\begin{proof}
(i) Since $R_\Sigma^3=3M/(4\pi\rho)=3M/(4\pi\rhoc\,x)$, \eqref{eq:mqx} gives
\begin{equation}
\frac{m^{2}-q}{M^{2}}=(1+2x)^{2}-6\,\varepsilon\,x^{2/3},\qquad \varepsilon\equiv\Bigl(\frac{3}{4\pi\rhoc M^{2}}\Bigr)^{1/3}=\Bigl(\frac{3}{8\pi\lambda M^{2}}\Bigr)^{1/3}.
\end{equation}
The function $\phi(x)=(1+2x)^2x^{-2/3}$ has $\phi'(x)=0$ only at $x=1/4$, where it attains its minimum $\phi(1/4)=\tfrac94\,4^{2/3}$. Hence $m^2-q>0$ for all $x\in(0,1)$ if and only if $6\varepsilon<\tfrac94\,4^{2/3}$, that is $\varepsilon^3<27/32$, which is $M^2>4/(9\pi\lambda)$, that is $M>M_c$. The density passes through $\rhoc/4$ at $a=a_m\in(a_-,a_+)$. For $M<M_c$ the discriminant is therefore negative at that instant and, by continuity, in a neighbourhood of it.

(ii) By (i) the radii $r_\pm$ are continuous functions of $\tau$ with $r_-<m<r_+$, and $f(u,R)=(R-r_-)(R-r_+)/R^2$. Where $f_\Sigma\ne0$ the difference $R_\Sigma-r_\pm$ does not vanish, so it has a fixed sign on each interval on which $f_\Sigma\ne0$. At the start of the contraction, $\tau=-T/2$, $\dot a=0$ and $a=a_+=\Mzero(1-x_+)$, where $x_+<1/4$ because $a_+>a_m$. Then $R_\Sigma=2M(1-x_+)/\sin^2\chi_\Sigma\ge 2M(1-x_+)$, and
\begin{equation}
R_\Sigma-m\ge M(1-4x_+)>0 .
\end{equation}
Since $f_\Sigma=\cos^2\chi_\Sigma>0$ and $R_\Sigma>m$, we have $R_\Sigma>r_+$. This sign persists until the first zero of $f_\Sigma$ at $\tau=-\tau_2$, where necessarily $R_\Sigma=r_+$. On $(-\tau_2,-\tau_1)$, $f_\Sigma<0$, which means $r_-<R_\Sigma<r_+$, and at $\tau=-\tau_1$ the boundary reaches $r_-$. On $(-\tau_1,\tau_1)$, $f_\Sigma>0$ and $R_\Sigma<r_-$. The remaining intervals follow from \Cref{lem:sym}.
\end{proof}

\subsection{Construction of the exterior}

We write $R_b=a_-\sin\chi_\Sigma$, $m_b=M(1+2\rho_b/\rhoc)$, $q_b=q|_{\tau=0}$, $f_b(R)=1-2m_b/R+q_b/R^2$, and $r_{\pm,b}$ for the roots of $f_b$. By \Cref{prop:order}, $R_b<r_{-,b}<r_{+,b}$. The tortoise coordinate of the static geometry is $R_\ast(R)=\int_{R_b}^{R}\dd R'/f_b(R')$, which is finite and increasing on $[R_b,r_{-,b})$ and diverges as $R\to r_{-,b}$.

\begin{enumerate}
\item[\textbf{A}] (contraction) Advanced Bondi coordinates $(v,R)$ with $\dd v/\dd\tau=1/(\cos\chi_\Sigma-\dot R_\Sigma)$ along $\Sigma$ for $-T/2\le\tau\le0$, normalised by $v(0)=v_b$. The region is $\{v\le v_b,\ R\ge R_\Sigma(v)\}$ with metric
\begin{equation}
ds^{2}=-f_A\,\dd v^{2}+2\,\dd v\,\dd R+R^{2}\dd\Omega^{2},\qquad f_A=1-\frac{2m(v)}{R}+\frac{q(v)}{R^{2}},
\end{equation}
where $m(v)$ and $q(v)$ are \eqref{eq:solution} evaluated at $\tau(v)$.
\item[\textbf{B}] (expansion) The image of A under time reversal. Retarded coordinates $(u,R)$ with $\dd u/\dd\tau=1/(\cos\chi_\Sigma+\dot R_\Sigma)$ for $0\le\tau\le T/2$ and $u(0)=u_b\equiv v_b$. The region is $\{u\ge u_b,\ R\ge R_\Sigma(u)\}$ with metric \eqref{eq:Bondi}, \eqref{eq:chargedVaidya} and $m(u)$, $q(u)$ from \eqref{eq:solution}.
\item[\textbf{W}] (static wedge) The part of the static charged geometry $f_b$ in advanced coordinates,
\begin{equation}
W=\bigl\{(v,R):\ v\ge v_b,\ R\ge R_b,\ \text{and}\ v-v_b\le 2R_\ast(R)\ \text{if}\ R<r_{-,b}\bigr\}.
\end{equation}
\end{enumerate}
The pieces are joined as follows. A and W share the coordinates $(v,R)$ and are glued along the null hypersurface $N^-=\{v=v_b\}$. W and B are glued along the null hypersurface $N^+=\{u=u_b,\ R\ge R_b\}$ by the identification $v=u+2R_\ast(R)$ for $R<r_{-,b}$. The three pieces meet at the bounce event $\mathrm B=(v_b,R_b)$ on $\Sigma$. The exterior is
\begin{equation}
\mathcal E=A\cup W\cup B .
\end{equation}
It has two null boundary pieces that are not glued to anything: the part of the static geometry at $v\to\infty$, which is the Cauchy horizon of the wedge, and the part of $N^+$ with $R>r_{-,b}$ on the side of $W$. Three features of this construction are choices made within the Bondi class and are not implied by the brane equations. The effective flux is directed inward during the contraction and outward during the expansion, which requires a corresponding boundary condition on the bulk Weyl field at past null infinity. The wedge $W$ is the static charged geometry, which is the minimal continuation compatible with $C^1$ gluing at $N^\pm$. Finally, $\mathcal E$ is not extended across the two null boundary pieces. The region beyond the Cauchy horizon of the wedge is not determined by the data on $\Sigma$, and the inner horizon of a static charged geometry is known to be unstable under perturbations, so whether the dynamical exterior is stable is a separate question that we do not address. The existence of a regular bulk geometry that induces $\mathcal E_{\mu\nu}$ is likewise not shown.

\begin{proposition}[Global exterior]\label{prop:global}
Let $M>M_c$. The spacetime $\mathcal E$ has the following properties.
\begin{enumerate}
\item[(a)] The metric is smooth on $A$, $W$, $B$ and of class $C^{1,1}$ across $N^-$ and $N^+$. The curvature tensor is continuous across them. Hence the effective brane equations \eqref{eq:vacuum} hold on $\mathcal E$ with $\mathcal E_{\mu\nu}=-G_{\mu\nu}$ trace-free and without surface layers.
\item[(b)] $\Sigma$ is a timelike boundary of $\mathcal E$ for the whole interval $-T/2\le\tau\le T/2$, and the Israel conditions of \cref{sec:exterior} hold on all of it, including the trapped intervals $\tau_1<|\tau|<\tau_2$.
\item[(c)] $R\ge R_b>0$ on $\mathcal E$. All curvature scalars are bounded: $\Ric=0$, $R_{\mu\nu}R^{\mu\nu}=4q^2/R^8$, and $R_{\alpha\beta\gamma\delta}R^{\alpha\beta\gamma\delta}=48m^2/R^6-96mq/R^7+56q^2/R^8$, with $m\le m_b$ and $q\le q_b$.
\item[(d)] With $T^{\rm eff}_{\mu\nu}\equiv-\mathcal E_{\mu\nu}/8\pi=G_{\mu\nu}/8\pi$, the effective null, weak and dominant energy conditions hold on $\mathcal E$.
\end{enumerate}
\end{proposition}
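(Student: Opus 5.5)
The plan is to check the four properties piece by piece, using the time symmetry of \Cref{lem:sym} to reduce every statement about $B$ and $N^+$ to the corresponding statement about $A$ and $N^-$.

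\textbf{Step 1: smoothness of each piece and regularity of the charts.} On $A$ the metric is advanced charged Vaidya with $m(v)$, $q(v)$ from \eqref{eq:solution}. These are smooth in $\tau$, and $\tau\mapsto v$ is a diffeomorphism on $[-T/2,0]$ because $\dd v/\dd\tau=1/(\cos\chi_\Sigma-\dot R_\Sigma)$ is smooth and positive there: $\dot R_\Sigma\le0$ on the contraction. So $A$ is smooth. The wedge $W$ is static, so it is smooth.

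\textbf{Step 2: $C^{1,1}$ matching across $N^\pm$.} Across $N^-=\{v=v_b\}$ the metric components are continuous, since $m(v_b)=m_b$ and $q(v_b)=q_b$. The jump in the $v$-derivative is proportional to $m_v$ and $q_v$ at $v_b$, and these vanish because $\dd R_\Sigma/\dd v\propto\dot R_\Sigma=0$ at the bounce.

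The second derivatives are also continuous there, because $m_{vv}$ and $q_{vv}$ vanish at $v_b$. To see this, note that
\[
m_v=-3\kappa R_\Sigma^{-4}\,\frac{\dd R_\Sigma}{\dd v},\qquad \frac{\dd R_\Sigma}{\dd v}=\dot R_\Sigma\,(\cos\chi_\Sigma-\dot R_\Sigma),\qquad \frac{\dd}{\dd v}=\dot v^{-1}\frac{\dd}{\dd\tau}.
\]
Then $\dd^2R_\Sigma/\dd v^2$ at $\tau=0$ equals $\ddot R_\Sigma\cos^{2}\chi_\Sigma\neq0$ by \Cref{prop:cycle}(c). So $m_{vv}\ne0$, and only $C^{1,1}$ survives. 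This is consistent with the stated class.

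That leaves the claim in (a) that the curvature is continuous, even though second metric derivatives jump. I would handle it using \eqref{eq:Gvaidya}. The only $v$-derivatives in the curvature enter through $\Theta=(2Rm_v-q_v)/R^3$ and $m_v$, which are first derivatives and vanish at $v_b$. So the Riemann tensor is continuous, with no $\delta$-terms, and the Barrabès--Israel null-shell conditions give no surface layer.

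For $N^+$ I would transport everything by the map $v=u+2R_\ast$, using that $B$ is the time reverse of $A$. The identification requires $R<r_{-,b}$, where $R_\ast$ is finite. The portion of $N^+$ with $R>r_{-,b}$ is a free boundary, and I must check it is consistent. This step, verifying that $W$ really is a well-defined region and that $N^+$ sits as claimed, is the main obstacle. I would argue through the null geometry of $f_b$: outgoing rays from $\mathrm B$ at $R<r_{-,b}$ satisfy $\dd R_\ast/\dd v=1/2$.

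\textbf{Step 3: the boundary $\Sigma$, part (b).} On $A$ this is \Cref{prop:exterior} in the advanced form, valid for $\tau\le0<\tau_1$, and on $B$ it follows by symmetry. The trapped intervals cause no difficulty, because the chart factors $\cos\chi_\Sigma\mp\dot R_\Sigma$ never vanish on the half-cycles where each chart is used. This is where \Cref{prop:order}(ii) is used: since $R_\Sigma<r_-$ near the bounce, $\Sigma$ stays inside the region $R\ge R_\Sigma(v)$ in the required way. $\Sigma$ is timelike because the induced metric is $-\dd\tau^2+R_\Sigma^2\dd\Omega^2$.

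\textbf{Step 4: bounds and energy conditions, parts (c) and (d).} For (c), $R_\Sigma$ is minimal at the bounce, and every point of $\mathcal E$ has $R\ge R_\Sigma$ or, in $W$, $R\ge R_b$, so $R\ge R_b$. The bound $m\le m_b$ holds because $m=M(1+2x)$ with $x$ maximal at the bounce. The bound $q\le q_b$ follows from \eqref{eq:mqx}, since $q\propto a^{-2}$. The curvature scalars then follow from \eqref{eq:Gvaidya} together with $R\ge R_b$.

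For (d), write $8\pi T^{\rm eff}=\frac{q}{R^4}(-h^{(2)}+\Omega)+\Theta k k$. The Maxwell-type part with $q>0$ satisfies the dominant energy condition. The null part satisfies the dominant energy condition provided $\Theta\ge0$. On $A$ the displayed advanced formula gives $\Theta\ge0$ for $\dot R_\Sigma\le0$ and $R\ge R_\Sigma$. On $B$, \eqref{eq:Theta} gives $\Theta\ge0$. On $W$, $\Theta=0$. A sum of two tensors each satisfying the dominant energy condition again satisfies it, and the dominant condition implies the null and weak conditions.
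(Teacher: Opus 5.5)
Your proposal is correct and follows the paper's proof essentially step by step: $m_v=q_v=0$ at the bounce gives the $C^{1,1}$ gluing, each chart is regular on its half-cycle, $m$ and $q$ peak at the bounce, and the Maxwell-plus-null decomposition with $\Theta_A,\Theta_B\ge0$ gives the energy conditions; the $N^+$ step you flag as the main obstacle is handled in the paper simply by rerunning the $N^-$ argument in retarded coordinates, where the identification $v=u+2R_\ast$ puts the static metric of $W$ into retarded form. Two small clean-ups: drop your opening claim that $m_{vv}$ and $q_{vv}$ vanish at $v_b$, which your own computation $m_{vv}\propto\ddot R_\Sigma\cos^2\chi_\Sigma\ne0$ refutes, and note that \eqref{eq:Gvaidya} only controls the Ricci part, so continuity of the full Riemann tensor also needs the observation that no second $v$-derivative of $f$ enters it (equivalently, the Weyl part depends on $m$, $q$ and $R$ only).
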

\begin{proof}
(a) On $A$, $m_v=-3\kappa\,(\dd R_\Sigma/\dd v)/R_\Sigma^{4}$ and $q_v=2R_\Sigma m_v$, with $\dd R_\Sigma/\dd v=\dot R_\Sigma(\cos\chi_\Sigma-\dot R_\Sigma)$. At the bounce $\dot R_\Sigma=0$, so $m_v=q_v=0$ at $v=v_b$, and $m(v_b)=m_b$, $q(v_b)=q_b$. Extending $m$ and $q$ by the constants $m_b$, $q_b$ for $v>v_b$ therefore gives $C^1$ functions, whose second derivatives have a finite jump. The metric components $f(v,R)$ are then $C^{1,1}$ in $v$. Curvature components of the Bondi form contain $m$, $q$, $m_v$, $q_v$ and $R$ but no second derivatives of $m$ and $q$ (Appendix~\ref{app:A}), so they are continuous. The same argument applies at $N^+$ in retarded coordinates, using $m_u=q_u=0$ at $u_b$. Since $G_{\mu\nu}$ is continuous, there is no distributional source on $N^\pm$. Trace-freeness and conservation of $\mathcal E_{\mu\nu}$ hold piecewise because $\Ric=0$ for the charged Bondi form, and they extend across $N^\pm$ by the Bianchi identity applied to the $C^{1,1}$ metric.

(b) On $A$ the advanced chart is regular for $\tau\le0$ by the discussion following \eqref{eq:tauis}, and the junction identities were verified for it in Appendix~\ref{app:B}. On $B$ the retarded chart is regular for $\tau\ge0$, and the identities hold there by the same verification, or by time reversal.

(c) On $A$, $R\ge R_\Sigma(v)\ge R_b$ because $R_\Sigma=a\sin\chi_\Sigma$ has its minimum at the bounce. The same holds on $B$, and on $W$ by definition. By \eqref{eq:mqx}, $m$ increases with $\rho$ and $q\propto a^{-2}$, so both are largest at the bounce. The scalar expressions follow from \eqref{eq:chargedVaidya} and Appendix~\ref{app:A}.

(d) The Einstein tensor of the Bondi form is the sum of a Maxwell-type term and a null term, \eqref{eq:Gvaidya}. The Maxwell-type term with $q>0$ has energy density $q/(8\pi R^4)>0$ for every observer and satisfies the dominant energy condition. The null term is $(\Theta/8\pi)k_\mu k_\nu$ and satisfies all three conditions if $\Theta\ge0$. On $A$ the null coefficient is
\begin{equation}
\Theta_A=\frac{2R\,m_v-q_v}{R^{3}}=-\frac{6\kappa\,\dot R_\Sigma\,(\cos\chi_\Sigma-\dot R_\Sigma)\,(R-R_\Sigma)}{R^{3}R_\Sigma^{4}}\ \ge0,
\end{equation}
because $\dot R_\Sigma\le0$ and $R\ge R_\Sigma$ and $\cos\chi_\Sigma-\dot R_\Sigma>0$. On $B$ it is $\Theta_B$ of \eqref{eq:Theta}, which is non-negative because $\dot R_\Sigma\ge0$ and $\cos\chi_\Sigma+\dot R_\Sigma>0$. On $W$ the null term vanishes. A sum of two tensors that each satisfy the dominant energy condition satisfies it, and the dominant condition implies the weak and null conditions.
\end{proof}

\begin{remark}
In the retarded chart alone the contraction would carry $\Theta<0$ (\cref{sec:exterior}). The choice made here, advanced for contraction and retarded for expansion, is therefore the one for which the effective source satisfies the energy conditions, and it is also the only choice within the Bondi class for which the chart stays regular across the trapped phase. Whether the effective flux is directed inward from past null infinity during contraction is a boundary condition on the bulk problem, which the brane equations do not fix.
\end{remark}

\subsection{Causal structure}

\begin{proposition}[Black hole region and re-emergence]\label{prop:causal}
Let $M>M_c$.
\begin{enumerate}
\item[(a)] For $v>v_b$ the region $\{R>r_{+,b}\}$ of $W$ is a static charged exterior with mass parameter $m_b=M(1+2\rho_b/\rhoc)$ and charge parameter $q_b$, with $m_b^2>q_b$. No future-directed causal curve in $\mathcal E$ starting at a point of $W$ with $r_{-,b}<R<r_{+,b}$ reaches the late-time region $\{R>r_{+,b}\}$ of $W$. In particular $\mathcal E$ contains a black hole region for observers in that asymptotic region.
\item[(b)] The boundary $\Sigma$ crosses the outer apparent horizon outward at $\tau=\tau_2$ and satisfies $R_\Sigma>r_+$ for $\tau_2<\tau\le T/2$. For $u\ge u_b$ every outgoing null ray $u=\text{const}$ leaving $\Sigma$ reaches $R\to\infty$ in $B$. Observers in the asymptotic region of $B$ therefore see the cloud re-expand through a white hole region, and receive the effective flux $\Theta_B\ge0$.
\item[(c)] The effective energy absorbed from past null infinity during the contraction and the effective energy emitted to future null infinity during the expansion are equal,
\begin{equation}
\int m_v\,\dd v=\int(-m_u)\,\dd u=m_b-M(1+2x_+)=2M\,(x_b-x_+),\qquad x_b=\frac{\rho_b}{\rhoc},\ x_+=\frac{\rho_+}{\rhoc}.
\end{equation}
\end{enumerate}
\end{proposition}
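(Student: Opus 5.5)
Part (a) reduces to the static charged geometry on $W$. Since $M>M_c$, \Cref{prop:order}(i) gives $m_b^2>q_b$, so $f_b=(R-r_{-,b})(R-r_{+,b})/R^2$ has two simple roots $0<r_{-,b}<r_{+,b}$. For $R>r_{+,b}$ we have $f_b>0$ and $\partial_v$ is timelike, so $W\cap\{R>r_{+,b}\}$ is a static charged exterior with parameters $m_b,q_b$. For the trapping statement I will use the advanced chart. Along any future-directed causal curve $\dot v\ge0$, and the causal condition $-f_b\dot v^2+2\dot v\dot R\le0$ gives $2\dot R\le f_b\dot v$ (when $\dot v>0$; if $\dot v=0$ the curve is radial ingoing null with $\dot R\le 0$). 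In the band $r_{-,b}<R<r_{+,b}$ we have $f_b<0$, so $\dot R\le0$, and $R$ is non-increasing while the curve stays in the band. Hence it can never reach $R=r_{+,b}$ from below.

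To make this a statement about $\mathcal E$ rather than only about $W$, I must check that the curve cannot exit $W$ and re-enter the late-time region. The band lies in $W$ only away from the boundaries $N^-$ and $N^+$. A curve starting in $W$ with $v>v_b$ cannot return to $N^-$, because $v$ is non-decreasing. The segment of $N^+$ with $R>r_{-,b}$ is an unglued boundary of $W$ by construction, so no continuation into $B$ occurs through it. Only the part of $N^+$ with $R<r_{-,b}$ connects to $B$, and to reach it the curve has already left the band downward. This shows that the causal past of the late exterior excludes the band, which is the definition of a black hole region relative to $\{R>r_{+,b},\ v\to\infty\}$. I expect this step to be the main obstacle: the identification $v=u+2R_\ast(R)$ must be described carefully, and I must argue that after entering $B$ the curve reaches a different asymptotic region, not the one in $W$.

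Part (b) follows directly from \Cref{prop:order}(ii) together with \Cref{lem:sym}. For the rays, in the retarded chart an outgoing radial null ray has $\dd R/\dd u=f/2$. For $u\ge u_b$ with $\tau(u)>\tau_2$, the ray starts at $R_\Sigma>r_+(u)$. I will show that $R>r_+(u)$ persists along it. Near the boundary this holds because $R_\Sigma$ itself moves outward and $r_+$ is evaluated from decreasing $m$ and $q$ (both decrease during expansion by \eqref{eq:mqx}). At large $R$, $f\to1$, so $R\to\infty$ at finite $u$-rate. For $u_b\le u$ with $\tau(u)<\tau_2$, the ray starts inside $r_+$. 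Here I will instead use that $B$ is the time reverse of $A$: in $A$ every ingoing ray $v=$ const reaches $\Sigma$ from $R=\infty$, and reversing gives outgoing rays reaching infinity. This identifies the region between $\Sigma$ and $r_+$ in $B$ as a white hole region. Nonnegativity of the received flux $\Theta_B$ is \Cref{prop:global}(d).

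Part (c) is a change of variables. Since $m$ is a function of $\tau$ through $\rho$, $\int m_v\,\dd v=\int_{-T/2}^{0}\frac{\dd m}{\dd\tau}\dd\tau=m(0)-m(-T/2)$. By \eqref{eq:mqx} this equals $M(1+2x_b)-M(1+2x_+)=2M(x_b-x_+)$. By time symmetry, $\int(-m_u)\,\dd u=m(0)-m(T/2)$, which is the same value. Monotonicity of $\dd v/\dd\tau>0$ and of $\dd u/\dd\tau>0$ on the respective intervals justifies the substitutions.
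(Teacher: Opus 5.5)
Your overall structure matches the paper's proof, but two steps need repair.

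\textbf{Part (a).} The step you flag as the main obstacle is left open, and it is exactly what turns a statement about $W$ into one about $\mathcal E$. You do not need to follow the curve inside $B$ or decide which asymptotic region it reaches. In the retarded chart $\partial_R$ is future-directed null, so $g(\dot\gamma,\partial_R)=-\dot u\le0$. Hence $u$ is non-decreasing along every future-directed causal curve $\gamma$, and $N^+=\{u=u_b\}$ is crossed only from $W$ into $B$. This is the same reasoning as your $\dot v\ge0$ argument showing that $N^-$ is crossed only from $A$ into $W$. A curve that has left $W$ therefore never re-enters it, so it never reaches the late-time region $\{R>r_{+,b}\}$ of $W$. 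Inside $W$, your monotonicity of $R$ in the band already excludes this. That is how the paper closes (a).

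\textbf{Part (b).} Here you use the wrong family of null rays. For the metric \eqref{eq:Bondi} the radial null condition is $\dd u\,(f\,\dd u+2\,\dd R)=0$. So the outgoing rays are $u=\text{const}$, as the statement itself says, and $\dd R/\dd u=-f/2$ is the ingoing family. The tangent $\partial_u+\tfrac{f}{2}\partial_R$ of your curves has norm $-2f$. They are therefore timelike wherever $f>0$, and your first case studies curves that are not null at all. The claimed persistence of $R>r_+(u)$ is also unsupported: $m$ and $q$ both decrease, but a decreasing $q$ raises $r_+=m+\sqrt{m^2-q}$.

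The correct argument, which is the paper's, needs no comparison with $r_+$ and no case split. Along $u=u_0$ the coordinate $R$ is an affine parameter, since $\Gamma^\mu{}_{RR}=0$ when $g_{RR}=0$ and $g_{uR}=-1$. The region $B$ contains every $R\ge R_\Sigma(u_0)$ and the chart is regular there, so each ray runs to $R\to\infty$ for every $u_0\ge u_b$. Your time-reversal argument from the rays $v=\text{const}$ of $A$ is essentially this and already covers all $u\ge u_b$. Keep it and discard the first case.

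\textbf{Part (c).} This is correct. To justify calling $\int m_v\,\dd v$ the energy absorbed from past null infinity, add that the flux through a large sphere is $4\pi R^2\Theta_A/8\pi=m_v-q_v/(2R)\to m_v$. The same computation gives $-m_u$ for $B$.
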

\begin{proof}
(a) The part of $W$ with $R>r_{+,b}$ is, by definition, a piece of the static charged geometry with parameters $m_b$, $q_b$, and $m_b^2>q_b$ by \Cref{prop:order}(i) at $\tau=0$. Let $p\in W$ with $r_{-,b}<R<r_{+,b}$. There $f_b<0$, so $R$ strictly decreases along every future-directed causal curve. Such a curve therefore cannot reach $R>r_{+,b}$ while it stays in $W$. It can leave $W$ only through $N^{+}$, which is a hypersurface $u=\text{const}$ that future-directed causal curves cross only towards larger $u$, that is into $B$, or it can approach the Cauchy horizon at $v\to\infty$. Since $N^-$ is crossed only from $A$ into $W$ and $N^+$ only from $W$ into $B$, a curve that has left $W$ cannot return to it.

(b) By \Cref{prop:order}(ii) and the time symmetry, $R_\Sigma>r_+$ for $\tau>\tau_2$, and $R_\Sigma=r_+$ at $\tau_2$ with $\dot R_\Sigma>0$. The null rays $u=\text{const}$ are the curves $\dd u=0$ of the metric \eqref{eq:Bondi}. They extend to $R\to\infty$ because the retarded chart is regular at all $R\ge R_\Sigma(u)$. The sign of $\Theta_B$ was established in \Cref{prop:global}(d).

(c) By \eqref{eq:solution}, $m(v)=M(1+2x)$ increases from $M(1+2x_+)$ at $\tau=-T/2$ to $m_b$ at the bounce, and $m(u)$ decreases back to $M(1+2x_+)$ at $\tau=T/2$ by time symmetry. The flux through a large sphere at fixed advanced time is $4\pi R^2\,\Theta_A/8\pi\to m_v$, and similarly for the retarded flux.
\end{proof}

Figure~\ref{fig:five} shows the causal structure in a qualitative diagram, drawn on the static charged geometry that $W$ is cut from. The positions of the apparent horizons in $A$ and $B$ differ from the straight lines of the static diagram because $m$ and $q$ vary there, but their ordering relative to $\Sigma$ is that established in \Cref{prop:order}.

\begin{figure}[t]
\centering
\includegraphics[height=0.42\textheight]{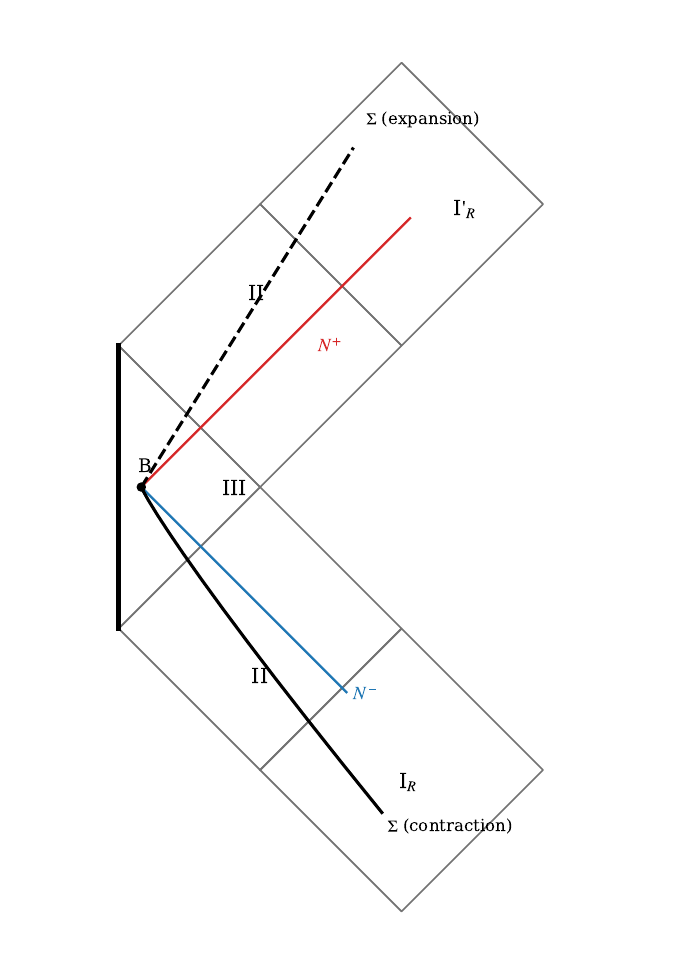}
\caption{Qualitative causal diagram of $\mathcal E$ for a trapped cloud. The thick vertical line is the centre of the cloud, whose boundary $\Sigma$ is drawn solid during contraction and dashed during expansion. I$_R$ is the asymptotic region seen at the start, II is the trapped region between $r_-$ and $r_+$, III is the inner region $R<r_-$ in which the bounce B occurs, II$'$ is the white hole region and I$'_R$ the second asymptotic region. $N^-$ and $N^+$ are the null hypersurfaces $v=v_b$ and $u=u_b$ through B.}
\label{fig:five}
\end{figure}

\subsection{Numerical example}

For the trapped cloud of \cref{sec:num}, $\rho_0=\lambda$ and $\chi_\Sigma=0.8$ (units $\lambda=1$), the roots of $g-1=\cot^2\chi_\Sigma$ are $a_1=0.8543$ and $a_2=4.284$, with $\tau_1=0.0650$ and $\tau_2=2.504$. The half period is $T/2=13.19$. At $\tau=-\tau_2$ the boundary has $R_\Sigma=3.073=r_+$, and at $\tau=-\tau_1$ it has $R_\Sigma=0.6135\simeq r_-=0.6129$. At the bounce $R_b=0.5893$, $m_b=4.336=2.804\,M$, $q_b=4.931$, and the static radii are $r_{-,b}=0.6119$ and $r_{+,b}=8.059$, so that $R_b<r_{-,b}$ as required. The minimum of $m^2-q$ over the cycle is $2.171>0$. The effective energy exchanged with the two asymptotic regions is $m_b-M=2.790$ in units of $\lambda^{-1/2}$. Figure~\ref{fig:four} shows $R_\Sigma$, $r_\pm$, $m$ and $q$ along the cycle.

\begin{figure}[t]
\centering
\includegraphics[width=0.96\textwidth]{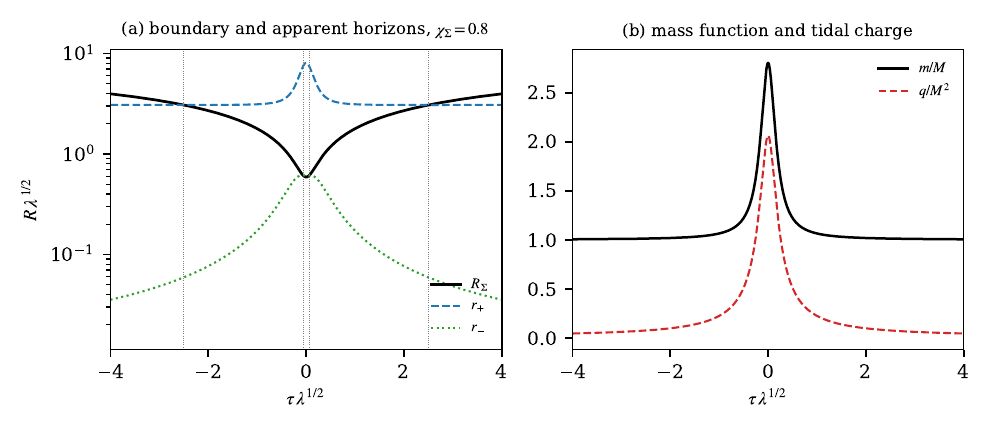}
\caption{Trapped cloud with $\rho_0=\lambda$, $\chi_\Sigma=0.8$. (a) Boundary radius $R_\Sigma$ (solid) and apparent horizon radii $r_+$ (dashed), $r_-$ (dotted). Vertical lines mark $|\tau|=\tau_1,\tau_2$. The boundary is between the two horizons for $\tau_1<|\tau|<\tau_2$ and inside $r_-$ at the bounce. (b) Mass function $m/M$ and tidal charge $q/M^2$ on $\Sigma$.}
\label{fig:four}
\end{figure}

\section{Numerical illustration}\label{sec:num}

We work in units $\lambda=1$, so that lengths are in units of $\lambda^{-1/2}$. Take $\rho_0=\lambda$, for which $\astar=0.4886$ and $\Xmax=4.987$. The roots are $a_-=0.8215$ and $a_+=8.370$, the bounce density is $\rho_b/\rhoc=0.902$, and the period is $T=26.37$. At the bounce $\Ric_b=127.1\,\lambda$, in agreement with $8\pi\rho_b(1+2\rho_b/\rhoc)$. Along the cycle the maxima of the invariants are $5.04\times10^{3}\lambda^2$ for $R_{\mu\nu}R^{\mu\nu}$ and $4.69\times10^{3}\lambda^2$ for the Kretschmann scalar, below the bounds of \Cref{prop:curv}. The critical mass is $M_c=0.376$.

Table~\ref{tab:clouds} gives two clouds with the same $\rho_0$. For $\chi_\Sigma=0.3$ the mass is $M=0.108<M_c$ and the boundary is never trapped. The exterior horizons $r_\pm(u)=m\pm\sqrt{m^2-q}$ either do not exist or lie inside $R_\Sigma$ at all times, so (H) holds (\cref{fig:three}a). For $\chi_\Sigma=0.8$ the mass is $M=1.546>M_c$. The boundary is trapped for $0.065<|\tau|<2.50$ and untrapped in the short window $|\tau|<0.065$ around the bounce (\cref{fig:three}b).

\begin{table}[t]
\centering
\begin{tabular}{lcc}
\toprule
 & $\chi_\Sigma=0.3$ & $\chi_\Sigma=0.8$ \\
\midrule
Mass $M$ & $0.1081$ & $1.546$ \\
$M/M_c$ & $0.29$ & $4.11$ \\
$\min_\tau R_\Sigma$ & $0.2428$ & $0.5893$ \\
$\min_\tau f_\Sigma$ & $0.564$ & $-1.566$ \\
Trapped on $\Sigma$ & no & yes \\
Exterior condition (H) & satisfied & not applicable \\
$\max_u m/M=1+2\rho_b/\rhoc$ & $2.804$ & $2.804$ \\
\bottomrule
\end{tabular}
\caption{Two clouds with $\rho_0=\lambda=1$ (units $\lambda=1$).}
\label{tab:clouds}
\end{table}

\begin{figure}[t]
\centering
\includegraphics[width=0.96\textwidth]{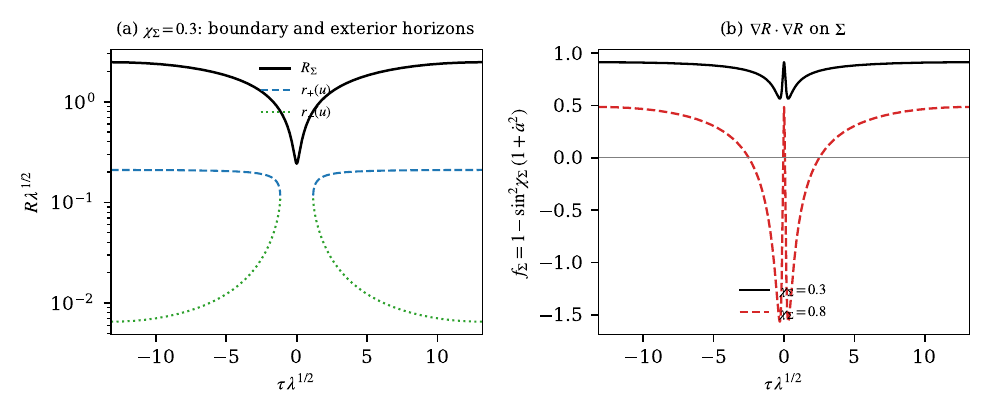}
\caption{(a) Boundary radius $R_\Sigma$ (solid) for $\chi_\Sigma=0.3$ and exterior horizon radii $r_+$ (dashed) and $r_-$ (dotted) where they exist. The boundary stays outside $r_+$ whenever horizons exist. (b) The quantity $f_\Sigma=1-\sin^2\chi_\Sigma(1+\dot a^2)$ for $\chi_\Sigma=0.3$ (solid) and $\chi_\Sigma=0.8$ (dashed). Negative values mean that $\Sigma$ is trapped.}
\label{fig:three}
\end{figure}

A scan of $2400$ clouds with $\rho_0$ between $\rho_{0,\min}$ and $100\lambda$ and $\chi_\Sigma$ between $0.02$ and $1.55$ confirmed \Cref{prop:trap} in every case. Condition (H) failed for some compact untrapped clouds ($M<M_c$ with $\chi_\Sigma\gtrsim1$), in which the boundary lies inside an exterior inner horizon. These configurations lie outside the validity of the retarded-chart statements.

\section{Discussion}\label{sec:disc}

The main conclusion is that a homogeneous dust cloud on the SS brane does not reach a curvature singularity in the interior. The mechanism is the negative quadratic term in the Friedmann equation, which corresponds to an effective negative pressure growing with density. It is purely classical. It reverses the sign of the deceleration at $\rho=\rhoc/4$ and turns the closed-universe contraction into a bounce at a radius that is bounded below by $\astar=\sqrt{3/4\pi\lambda}$. The effective strong energy condition fails once $\rho>\rhoc/4$, which is what allows the bounce, and the effective null energy condition fails for $\rho>\lambda$. The curvature bound of \Cref{prop:curv} plays a role analogous to the Planckian bound of loop-quantum-gravity bounces \cite{BenAchourUzan2020}, but it is set by the brane tension and arises from higher-dimensional geometry.

Two points differ from what one might guess from the RS case. First, the universal radius $\astar$ and the density $\lambda$ are properties of the saturating configuration only. A given cloud bounces at a radius $a_-\ge\astar$ that grows with $\rho_0$, and at a density below $\rhoc$. Second, the tidal charge in the exterior is positive and time dependent. It is fixed by the junction conditions, and it is $q=3\kappa/R_\Sigma^2$ rather than a constant parameter of a static solution.

For trapped clouds the exterior constructed in \cref{sec:global} describes a black hole for distant observers in the first asymptotic region, with a trapped region between two apparent horizons, a bounce inside the inner horizon and a re-expansion through a white hole region into a second asymptotic region. The curvature scalars are bounded everywhere on this exterior, and the effective source satisfies the dominant energy condition. The effective mass seen from infinity rises to $M(1+2\rho_b/\rhoc)$ at the bounce, which is up to three times the cloud mass, and the excess is returned to the second asymptotic region. This is the brane analogue of the black-to-white hole bounce studied for quantum gravity corrections \cite{BenAchour2020}, with the difference that the repulsive mechanism is classical and the exterior is dynamical.

The analysis refers to an exactly homogeneous and spherical cloud, and completeness of the interior has been shown for the radial comoving congruence only. Stability of the bounce against inhomogeneous or non-spherical perturbations, and the Kaluza--Klein spectrum for a timelike extra dimension (\cref{sec:timelike}), are not covered. The replacement of the homogeneous interior by a Lema\^itre--Tolman--Bondi solution, which allows one to ask whether the repulsive correction also protects against shell crossing and shell focusing, is the subject of a follow-up project that we expect to report in the near future.

\section*{Acknowledgments}
The authors thank Prof.  Patrick Peter (Institut d'Astrophysique de Paris, France) and Prof. Varun Sahni (Inter University Centre for Astronomy and Astrophysics, India) for useful discussions and suggestions.

\appendix

\section{Einstein tensor of the Bondi form}\label{app:A}

For the metric $ds^2=-f\,\dd u^2-2\,\dd u\,\dd R+R^2\dd\Omega^2$ with $f=1-2m(u,R)/R$,
\begin{align}
G_{uu}&=\frac{2}{R^{3}}\bigl(R\,m_R-R\,m_u-2m\,m_R\bigr),& G_{uR}&=\frac{2m_R}{R^{2}},\\
G_{RR}&=0,& G_{\theta\theta}&=-R\,m_{RR},\\
\Ric&=\frac{2}{R^{3}}\,\partial_R\!\bigl(R^{2}m_R\bigr).&&
\end{align}
For $m=m(u)-q(u)/(2R)$ (that is, \eqref{eq:chargedVaidya}) these reduce to
\begin{equation}
G_{uu}=\frac{q\,f}{R^{4}}+\frac{q_u-2R\,m_u}{R^{3}},\quad G_{uR}=\frac{q}{R^{4}},\quad G_{RR}=0,\quad G_{\theta\theta}=\frac{q}{R^{2}},
\end{equation}
which is \eqref{eq:Gvaidya}. In both charts the Christoffel symbols contain only first derivatives of $f$ with respect to $u$ or $v$, so the Riemann tensor contains $m_u$ and $q_u$ (respectively $m_v$, $q_v$) but not their second derivatives. For the advanced form $ds^2=-f\,\dd v^2+2\,\dd v\,\dd R+R^2\dd\Omega^2$ the same expressions hold with $m_u\to-m_v$ in $G_{vv}$ and $G_{vR}\to-G_{vR}$.

\section{Verification of the junction conditions}\label{app:B}

With $n_\mu=(-\dot R,\dot u)$, $U^\mu=(\dot u,\dot R)$ and the charged Vaidya metric, the exterior quantities are
\begin{align}
G_{\mu\nu}n^\mu U^\nu&=\frac{\dot u^{2}}{R^{3}}\bigl(2R\,m_u-q_u\bigr),\\
G_{\mu\nu}n^\mu n^\nu&=-\frac{\dot u}{R^{6}}\Bigl[\dot u\bigl(2R^{4}m_u-R^{3}q_u+R^{2}q-2Rmq+q^{2}\bigr)+2R^{2}\dot R\,q\Bigr],
\end{align}
and $K_{\tau\tau}=-n_\mu\bigl(\ddot x^\mu+\Gamma^\mu{}_{\alpha\beta}U^\alpha U^\beta\bigr)$. Substitute \eqref{eq:solution}, $\dot u=1/(\cos\chi_\Sigma+\dot R)$, $\ddot u=-\ddot R/(\cos\chi_\Sigma+\dot R)^2$, $m_u=\dot m/\dot u$, $q_u=\dot q/\dot u$ with $\dot m=-3\kappa\dot R/R^4$ and $\dot q=-6\kappa\dot R/R^3$, and use the interior relations $\dot R^2=\sin^2\chi_\Sigma\,[\,a^2(8\pi/3)\rho(1-\rho/\rhoc)-1\,]$ and $\ddot R=\sin\chi_\Sigma\,\ddot a$ with \eqref{eq:Ray}. Then
\begin{equation}
K_{\tau\tau}=0,\qquad G_{\mu\nu}n^\mu U^\nu=0,\qquad G_{\mu\nu}n^\mu n^\nu=-\frac{8\pi\rho^{2}}{\rhoc},
\end{equation}
identically in $(\rho_0,\rhoc,\chi_\Sigma,R_\Sigma)$. We checked the three identities symbolically and numerically for the retarded chart with both signs of $\dot R$, and for the advanced chart with $\dot v=1/(\cos\chi_\Sigma-\dot R)$ and $\dot R<0$. The expanding branch of the advanced chart is not needed.

\end{document}